\newcommand{\m}[1]{\mathsf{#1}}
\newcommand{\mr}[1]{\mathrel{#1}}
\newcommand{\RB}[2][1]{\raisebox{#1mm}{#2}}
\newcommand{\RBX}[2][1]{\raisebox{#1ex}{#2}}
\newcommand{\xF}{\mathcal{F}}
\newcommand{\xT}{\mathcal{T}}
\newcommand{\xV}{\mathcal{V}}
\newcommand{\xI}{\mathcal{I}}
\newcommand{\xJ}{\mathcal{J}}
\newcommand{\Val}{\xV\m{al}}
\newcommand{\Var}{\xV\m{ar}}
\newcommand{\LVar}{\mathcal{L}\Var}
\newcommand{\EVar}{\mathcal{E}\Var}
\newcommand{\Dom}{\mathcal{D}\m{om}}
\newcommand{\xFTe}{\xF_{\m{terms}}}
\renewcommand{\xFTe}{\xF_{\m{te}}}
\newcommand{\xFTh}{\xF_{\m{theory}}}
\renewcommand{\xFTh}{\xF_{\m{th}}}
\newcommand{\inter}[1]{[\![{#1}]\!]}
\newcommand{\overlap}[3][p]{\langle #2, #1, #3 \rangle}
\newcommand{\Pos}{\mathcal{P}\m{os}}
\newcommand{\VPOS}{\Pos_\xV}
\newcommand{\FPOS}{\Pos_\xF}
\newcommand{\CCP}{\m{CCP}}
\newcommand{\tf}{\m{tf}}
\newcommand{\hole}{\square}
\newcommand\overlaps{\blacktriangle}
\newcommand{\Superterm}{\mr{{\vartriangleright}}}
\newcommand\Supertermmul{\Superterm_\m{mul}}
\newcommand\Supertermmuleq{\Superterm_\m{mul}^{=}}
\newcommand{\R}{\rightarrow}
\newcommand{\Ra}[1][]{\R^{#1}}
\newcommand{\Rb}[1][]{\R_{#1}}
\newcommand{\Rab}[2][]{\R_{#1}^{#2}}
\newcommand{\RbR}{\Rb[\xR]}
\newcommand{\xR}{\mathcal{R}}
\newcommand{\xRa}[1][]{\xrightarrow{#1}}
\newcommand{\RabR}[1]{\Rab[\xR]{#1}}
\newcommand{\LabR}[1]{\mr{\vphantom{\R}_{\xR}^{#1}{\L}}}
\newcommand{\LabRs}[1]{\mr{\prescript{*}{\xR}{\Ls}}}
\renewcommand{\L}{\leftarrow}
\newcommand{\La}[1][]{\mr{\vphantom{\R}^{#1}{\L}}}
\newcommand{\Lb}[1][]{\mr{\vphantom{\R}_{#1}{\L}}}
\newcommand{\LbR}{\Lb[\xR]}
\newcommand{\xLRb}[1][]{\xleftrightarrow[#1]{}}
\newcommand{\Rs}{\stackrel{\smash{\RB[-.5]{\tiny $\sim$~}}}{\R}}
\newcommand{\Ls}{\stackrel{\smash{\RB[-.5]{\tiny ~$\sim$}}}{\L}}
\newcommand{\Rbs}[1][]{\Rs_#1}
\newcommand{\RbRs}{\Rs_\xR}
\newcommand{\RabRs}[1]{\Rs_\xR^{#1}}
\newcommand{\J}{\downarrow}
\newcommand{\Jb}[1][]{\J_{#1}}
\newcommand{\JbR}{\Jb[\xR]}
\newcommand{\smallparallel}{\raisebox{.07em}{\scalebox{.6}{$\|$}}}
\newcommand{\PR}{\mr{\smash{\xrightarrow%
[\,\smash{\RBX[1.45]{\smallparallel}}\,]{}}}}
\newcommand{\PRa}[1][]{\PR^{#1}}
\newcommand{\PRb}[1][]{\PR_{#1}}
\newcommand{\PRbR}{\PRb[\xR]}
\newcommand{\PRs}{\stackrel{\smash{\RB[1.2]{\tiny $\sim$\,}}}{\PR}}
\newcommand{\PRbs}[1][]{\PRs_{#1}}
\newcommand{\rPR}{\mr{\smash{\xleftarrow%
[\,\smash{\RBX[1.45]{\smallparallel}}\,]{}}}}
\newcommand{\rPRb}[1][]{\mr{\vphantom{\PR}_{#1}\smash{\xleftarrow%
[\,\smash{\RBX[1.45]{\smallparallel}}\,]{}}}}
\newcommand{\rPRa}[1][]{\mr{\vphantom{\PR}^{#1}\smash{\xleftarrow%
[\,\smash{\RBX[1.45]{\smallparallel}}\,]{}}}} 
\newcommand{\corref}[1]{Corollary~\ref{cor:#1}}
\newcommand{\secref}[1]{Section~\ref{sec:#1}}
\newcommand{\defref}[1]{Definition~\ref{def:#1}}
\newcommand{\tabref}[1]{Table~\ref{tab:#1}}
\newcommand{\lemref}[1]{Lemma~\ref{lem:#1}}
\newcommand{\thmref}[1]{Theorem~\ref{thm:#1}}
\newcommand{\exaref}[1]{Example~\ref{exa:#1}}
\newcommand{\Rca}{\Rb[\m{ca}]}
\newcommand{\xRca}{\xR_\m{ca}}
\newcommand{\Rru}{\Rb[\m{ru}]}
\newcommand{\crest}{\textsf{crest}\xspace}
\newcommand{\ctrl}{\textsf{Ctrl}\xspace}
\newcommand{\seq}[2][n]{{#2_1},\dots,{#2_{#1}}}
\newcommand{\h}[1][.3]{\hspace{#1mm}}
\newcommand{\sh}{\h|\h}
\newcommand{\SET}[1]{\{\h#1\h\}}
\newcommand{\CO}[1]{[\h#1\h]}
\newcommand{\BB}{\mathbb{B}}
\newcommand{\ZZ}{\mathbb{Z}}
\newcommand{\crr}[3]{#1 \R #2~\CO{#3}}
\newcommand{\CRR}{\crr{\ell}{r}{\varphi}}
\newcommand{\ML}[2][0]{\makebox[#1mm][l]{#2}}
\begin{document}

\title{Confluence Criteria for Logically Constrained Rewrite Systems%
\thanks{This research is supported by FWF (Austrian Science Fund) project
I 5943-N.}}

\author{Jonas Sch\"opf\orcidID{0000-0001-5908-8519}%
\textsuperscript{(\href{mailto:jonas.schoepf@uibk.ac.at}{\Letter})} \and
Aart Middeldorp\orcidID{0000-0001-7366-8464}}
\institute{Department of Computer Science, Universit\"at Innsbruck,
Innsbruck, Austria \\
\email{\{jonas.schoepf,aart.middeldorp\}@uibk.ac.at}}

\authorrunning{Jonas Sch\"opf \and Aart Middeldorp}

\maketitle

\begin{abstract}
Numerous confluence criteria for plain term rewrite systems are
known. For logically constrained rewrite system, an attractive extension
of term rewriting in which rules are equipped with logical constraints,
much less is known. In
this paper we extend the strongly-closed and (almost) parallel-closed
critical pair criteria of Huet and Toyama to the logically
constrained setting. We discuss the challenges for automation and present
\crest, a new tool for logically constrained rewriting in which the
confluence criteria are implemented, together with experimental data.
\keywords{Confluence \and Term Rewriting \and Constraints \and
Automation}
\end{abstract}

\section{Introduction}
\label{sec:introduction}

Logically constrained rewrite systems constitute a general rewrite
formalism with native support for constraints that are handled by SMT
solvers. They are useful for program analysis, as illustrated in numerous
papers~\cite{CL18,FKN17,KN23,WM21}. Several results from term
rewriting have been lifted to constrained rewriting. We mention
termination analysis~\cite{K13,K16,WM18}, rewriting induction~\cite{FKN17},
completion~\cite{WM18} as well as runtime complexity analysis~\cite{WM21}.

In this paper we are concerned with confluence analysis of
logically constrained rewrite systems (LCTRSs for short).
Only two sufficient conditions for confluence of LCTRSs
are known. Kop and Nishida considered (weak) orthogonality
in \cite{KN13}. Orthogonality is the combination of left-linearity and
the absence of critical pairs, in a weakly orthogonal system trivial
critical pairs are allowed. Completion of LCTRSs
is the topic of \cite{WM18} and the underlying confluence condition
of completion is the combination of termination and joinability of
criticial pairs. In this paper we add two further confluence criteria.
Both of these extend known conditions for standard term rewriting to
the constrained setting. The first is the combination of linearity and
strong closedness of critical pairs, introduced by Huet~\cite{H80}.
The second, also due to \cite{H80}, is the combination of left-linearity
and parallel closedness of critical pairs. We also consider an
extension of the latter, due to Toyama~\cite{T88}.

\paragraph{Overview}
The remainder of this paper is organized as follows. In the next section
we summarize the relevant background.
\secref{confluence} recalls the existing confluence criteria
for LCTRSs and some of the underlying results. The new confluence
criteria for LCTRSs are reported in \secref{results}.
In \secref{automation} the automation challenges we faced are
described and we present our prototype implementation \crest.
Experimental results are reported in \secref{experiments},
before we conclude in \secref{conclusion}.

\section{Preliminaries}
\label{sec:prelims}

We assume familiarity with the basic notions of term rewrite systems
(TRSs)~\cite{BN98}, but shortly recapitulate terminology and
notation that we use in the remainder. In particular, we recall the 
notion of logically constrained rewriting as defined in~\cite{FKN17,KN13}.

We assume a many-sorted signature $\xF$ and
a set $\xV$ of (many-sorted) variables disjoint from $\xF$.
The signature $\xF$ is split into term symbols
from $\xFTe$ and theory symbols from $\xFTh$. The set $\xT(\xF,\xV)$
contains the well-sorted terms over this signature and
$\xT(\xFTh)$ denotes the set of well-sorted ground terms that consist
entirely of theory symbols.
We assume a mapping $\xI$ which assigns to every sort $\iota$ occurring
in $\xFTh$ a carrier set $\xI(\iota)$, and an interpretation
$\xJ$ that assigns to every symbol $f \in \xFTh$ with sort
declaration $\iota_1 \times \cdots \times \iota_n \to \kappa$ a function
$f_\xJ\colon \xI(\iota_1) \times \cdots \times \xI(\iota_n) \to
\xI(\kappa)$. Moreover, for every sort $\iota$ occurring in $\xFTh$ we
assume a set
$\Val_\iota \subseteq \xFTh$ of value symbols, such that all
$c \in \Val_\iota$ are constants of sort $\iota$ and $\xJ$ constitutes a
bijective mapping between $\Val_\iota$ and $\xI(\iota)$. Thus there exists
a constant symbol in $\xFTh$ for every value in the carrier set.
The interpretation $\xJ$ naturally extends to a mapping
$\inter{\cdot}$ from ground terms in $\xT(\xFTh)$ to values in
$\Val = \bigcup_{\iota \in \Dom(\xI)} \Val_{\iota}$:
$\inter{f(\seq{t})} = f_\xJ(\inter{t_1},\dots,\inter{t_n})$
for all $f(\seq{t}) \in \xT(\xFTh)$.
So every ground term in $\xT(\xFTh)$ has a unique value.
We demand that theory symbols and term symbols overlap only on values,
i.e., $\xFTe \cap \xFTh \subseteq \Val$. A term in $\xT(\xFTh,\xV)$
is called a \emph{logical} term.

Positions are strings of positive natural numbers used to address
subterms. The empty string is denoted by $\epsilon$. We write
$q \leqslant p$ and say that $p$ is below $q$ if $qq' = p$ for some
position $q'$, in which case $p \backslash q$ is defined to be $q'$.
Furthermore, $q < p$ if $q \leqslant p$ and $q \neq p$. Finally, positions
$q$ and $p$ are parallel, written as $q \parallel p$, if neither
$q \leqslant p$ nor $p < q$.
The set of positions of a term $t$ is defined as
$\Pos(t) = \SET{\epsilon}$ if $t$ is a variable or a constant,
and as $\Pos(t) = \SET{\epsilon} \cup \SET{iq \mid
\text{$1 \leqslant i \leqslant n$ and $q \in \Pos(t_i)$}}$ if
$t = f(\seq{t})$ with $n \geqslant 1$. The subterm of $t$ at position
$p \in \Pos(t)$ is defined as $t|_p = t$ if $p = \epsilon$ and as
$t|_p = t_i|_q$ if $p = iq$ and $t = f(\seq{t})$. We write $s[t]_p$ for
the result of replacing the subterm at position $p$ of $s$ with $t$.
We write $\VPOS(t)$ for $\SET{p \in \Pos(t) \mid t|_p \in \xV}$ and
$\FPOS(t)$ for $\Pos(t) \setminus \VPOS(t)$.
The set of variables occurring in the term $t$ is denoted by
$\Var(t)$. A term $t$ is linear if every variable occurs at most once in
it. A substitution is a mapping $\sigma$ from $\xV$ to $\xT(\xF,\xV)$ such
that its domain $\SET{x \in \xV \mid \sigma(x) \neq x}$ is finite. We
write $t\sigma$ for the result of applying $\sigma$ to the term $t$.

We assume the existence of a sort $\m{bool}$ such that
$\xI(\m{bool}) = \BB = \SET{\top,\bot}$,
$\Val_{\m{bool}} = \SET{\m{true},\m{false}}$, $\inter{\m{true}} = \top$,
and $\inter{\m{false}} = \bot$ hold.
Logical terms of sort $\m{bool}$ are called \emph{constraints}.
A constraint $\varphi$ is \emph{valid} if
$\inter{\varphi\gamma} = \top$ for all substitutions $\gamma$ such that
$\gamma(x) \in \Val$ for all $x \in \Var(\varphi)$.

A \emph{constrained rewrite rule} is a triple $\CRR$ where
$\ell, r \in \xT(\xF,\xV)$ are terms of the same sort such that
$\m{root}(\ell) \in \xFTe \setminus \xFTh$ and $\varphi$
is a logical term of sort $\m{bool}$. If $\varphi = \m{true}$ then the
constraint is often omitted, and the rule is denoted as $\ell \R r$.
We denote the set $\Var(\varphi) \cup (\Var(r) \setminus \Var(\ell))$ of
\emph{logical} variables in $\rho\colon \CRR$ by
$\LVar(\rho)$.
We write $\EVar(\rho)$ for the set
$\Var(r) \setminus (\Var(\ell) \cup \Var(\varphi))$ of
variables that appear only in the right-hand side of $\rho$.
Note that extra variables in right-hand sides are allowed, but they
may only be instantiated by values. This is useful to model
user input or random choice~\cite{FKN17}.
A set of constrained rewrite rules
is called a \emph{logically constrained rewrite system}
(LCTRS for short).

The LCTRS $\xR$ introduced in the example below computes the
maximum of two integers.

\begin{example}
\label{exa:example1}
Before giving the rules, we need to define the term and theory symbols,
the carrier sets and interpretation functions:
\begin{align*}
\xFTe \,=\, {}&\SET{\m{max} \colon \m{int} \times \m{int} \Rightarrow
\m{int}} \cup \SET{\m{0}, \m{1}, \dots \colon \m{int}}
\qquad \xI_{\m{bool}} \,=\, \BB \qquad \xI_{\m{int}} \,=\, \ZZ \\
\xFTh \,=\, {}&\SET{\m{0}, \m{1}, \dots \colon \m{int}} \cup
\SET{\m{true}, \m{false} \colon \m{bool}} \cup
\SET{\lnot \colon \m{bool} \Rightarrow \m{bool}} \\
{} \,\cup\, {}&\SET{- \colon \m{int} \Rightarrow \m{int}} \cup
\SET{\land \colon \m{bool} \times \m{bool} \Rightarrow \m{bool}} \\
{} \,\cup\, {}&\SET{+, - \colon \m{int} \times \m{int} \Rightarrow
\m{int}} \cup \SET{\leq, \geq, <, >, =\,\colon \m{int} \times \m{int}
\Rightarrow \m{bool}}
\end{align*}
The interpretations for theory symbols follow the usual semantics given
in the SMT-LIB theory
\textsf{Ints}\footnote{\url{http://smtlib.cs.uiowa.edu/Theories/Ints.smt2}}
used by the SMT-LIB logic \textsf{QF\_LIA}. The LCTRS $\xR$ consists of
the following constrained rewrite rules
\begin{align*}
\m{max}(x,y) &\R x ~ \CO{x \geq y} &
\m{max}(x,y) &\R y ~ \CO{y \geq x} &
\m{max}(x,y) &\R \m{max}(y,x)
\end{align*}
\end{example}

In later examples we refrain from spelling out
the signature and interpretations of the theory \textsf{Ints}.
We now define rewriting using constrained rewrite rules.
LCTRSs admit two kinds of rewrite steps. Rewrite rules give rise to
\emph{rule} steps, provided the constraint of the
rule is satisfied. In addition, theory calls of the form
$f(\seq{v})$ with $f \in \xFTh \setminus \Val$ and
values $\seq{v}$ can be evaluated in a \emph{calculation} step.
In the definition below, a
substitution $\sigma$ is said to \emph{respect} a rule
$\rho\colon \CRR$, denoted by $\sigma \vDash \rho$, if
$\Dom(\sigma) = \Var(\ell) \cup \Var(r) \cup \Var(\varphi)$,
$\sigma(x) \in \Val$ for all $x \in \LVar(\rho)$, and
$\varphi\sigma$ is
valid. Moreover, a constraint $\varphi$ is respected by $\sigma$,
denoted by $\sigma \vDash \varphi$, if $\sigma(x) \in \Val$ for all
$x \in \Var(\varphi)$ and $\varphi\sigma$ is valid.

\begin{definition}
Let $\xR$ be an LCTRS. A \emph{rule step} $s \Rru t$ satisfies
$s|_p = \ell\sigma$ and $t = s[r\sigma]_p$ for some
position $p$ and
constrained rewrite rule $\CRR$ that is respected by the substitution
$\sigma$. A \emph{calculation step} $s \Rca t$ satisfies
$s|_p = f(\seq{v})$ and $t = s[v]_p$ for some
$f \in \xFTh \setminus \Val$,
$\seq{v} \in \Val$ with $v = \inter{f(\seq{v})}$.
In this case $f(\seq{x}) \R y~\CO{y = f(\seq{x})}$
with a fresh variable $y$ is a \emph{calculation rule}. The set of all
calculation rules is denoted by $\xRca$.
The relation $\RbR$ associated with $\xR$ is the union of
$\Rru \cup \Rca$.
\end{definition}

We sometimes write $\Rb[p\sh\rho\sh\sigma]$ to indicate that the
rewrite step takes place at position $p$, using the constrained rewrite
rule $\rho$ with substitution $\sigma$.

\begin{example}
We have
$\m{max}(\m{1} + \m{2}, \m{4}) \RbR \m{max}(\m{3}, \m{4}) \RbR
\m{max}(\m{4},\m{3}) \RbR \m{4}$
in the LCTRS of \exaref{example1}. The first step is a calculation step.
In the third step we apply the rule $\m{max}(x,y) \R x~\CO{x \geq y}$
with substitution $\sigma = \SET{x \mapsto \m{4}, y \mapsto \m{3}}$.
\end{example}

\section{Confluence}
\label{sec:confluence}

In this paper we are concerned with the confluence of LCTRSs. An
LCTRS $\xR$ is \emph{confluent} if $t \RabR{*} \cdot \LabR{*} u$
for all terms $s$, $t$ and $u$ such that $t \LabR{*} s \RabR{*} u$.
Confluence criteria for TRSs are based on critical pairs.
Critical pairs for LCTRS were introduced in \cite{KN13}. The
difference with the definition below is that we add dummy constraints
for \emph{extra} variables in right-hand sides of rewrite rules.

\begin{definition}
\label{def:ccp}
An \emph{overlap} of an LCTRS $\xR$ is a triple $\overlap{\rho_1}{\rho_2}$
with rules $\rho_1\colon \crr{\ell_1}{r_1}{\varphi_1}$ and
$\rho_2\colon \crr{\ell_2}{r_2}{\varphi_2}$,
satisfying the following conditions:
\begin{enumerate}
\item
$\rho_1$ and $\rho_2$
are variable-disjoint variants of rewrite rules in $\xR \cup \xRca$,
\item
$p \in \FPOS(\ell_2)$,
\item
$\ell_1$ and $\ell_2|_p$ are unifiable with a mgu $\sigma$ such that
$\sigma(x) \in \Val \cup \xV$ for all
$x \in \LVar(\rho_1) \cup \LVar(\rho_2)$,
\item
$\varphi_1\sigma \land \varphi_2\sigma$ is satisfiable, and
\item
if $p = \epsilon$ then $\rho_1$ and $\rho_2$ are not variants, or
$\Var(r_1) \nsubseteq \Var(\ell_1)$.
\end{enumerate}
In this case we call
\(
\ell_2\sigma[r_1\sigma]_p \approx r_2\sigma~
\CO{\varphi_1\sigma \land \varphi_2\sigma \land \psi\sigma}
\)
a \emph{constrained critical pair} obtained from the overlap
$\overlap{\rho_1}{\rho_2}$.
Here
\[
\psi = \bigwedge~\SET{x = x \mid x \in \EVar(\rho_1) \cup \EVar(\rho_2)}
\]
The set of all constrained critical pairs of $\xR$ is denoted by
$\CCP(\xR)$.
\end{definition}

In the following we drop ``constrained'' and speak of critical pairs.
The condition $\Var(r_1) \nsubseteq \Var(\ell_1)$ in the fifth condition
is essential to correctly deal with extra variables in rewrite rules.
The equations ($\psi$) added to the constraint of a
critical pair save the information which variables in a critical
pair were introduced by variables only occurring in the right-hand
side of a rewrite rule and therefore should
\emph{only} be instantiated by values. 
Critical pairs as defined in~\cite{KN13,WM18} lack this information.
The proof of \thmref{sccp} in the next section makes clear why those
trivial equations are essential for our confluence criteria, see
also \exaref{new}.

\begin{example}
Consider the LCTRS consisting of the rule
\[
\rho\colon ~ \m{f}(x) \R z~\CO{x = z \textbf{\textasciicircum} \m{2}}
\]
The variable $z$ does not occur in the left-hand side and the condition
$\Var(r_1) \nsubseteq \Var(\ell_1)$ ensures that $\rho$ overlaps with (a
variant of) itself at the root position. Note that $\xR$ is not confluent
due to the non-joinable local peak $\m{-4} \L \m{f}(\m{16}) \R \m{4}$.
\end{example}

\begin{example}
\label{exa:example1 ccp}
The LCTRS $\xR$ of \exaref{example1} admits the following critical pairs:
\begin{align*}
x &\approx y~\CO{x \geq y \land y \geq x} &&\overlap[\epsilon]{1}{2} \\
x &\approx \m{max}(y,x)~\CO{x \geq y} &&\overlap[\epsilon]{1}{3} \\
y &\approx \m{max}(y,x)~\CO{y \geq x} &&\overlap[\epsilon]{2}{3}
\end{align*}
The originating overlap is given on the right, where we number the
rewrite rules from left to right in \exaref{example1}.
\end{example}

Actually, there are three more overlaps since the position of overlap
($\epsilon$) is the root position. Such overlaps are called
\emph{overlays} and always come in pairs. For instance,
$\m{max}(y,x) \approx x~\CO{x \geq y}$ is the critial
pair originating from $\overlap[\epsilon]{3}{1}$. For confluence criteria
based on symmetric joinability conditions of critical pairs (like weak
orthogonality and joinability of critical pairs for terminating systems)
we need to consider just one critical pair, but this is not true
for the criteria presented in the next section.

Logically constrained rewriting aims to rewrite
(unconstrained) terms with constrained rules. However, for
the sake of analysis, rewriting \emph{constrained terms}
is useful. In particular, since critical pairs in LCTRSs come
with a constraint, confluence criteria need to consider
constrained terms. The relevant notions defined below
originate from~\cite{FKN17,KN13}.

\begin{definition}
\label{def:constrained rewriting}
A \emph{constrained term} is a pair $s~\CO{\varphi}$ of a term $s$ and
a constraint $\varphi$. Two constrained terms $s~\CO{\varphi}$ and
$t~\CO{\psi}$ are \emph{equivalent},
denoted by $s~\CO{\varphi} \sim t~\CO{\psi}$, if for every substitution
$\gamma$ respecting $\varphi$ there is some substitution $\delta$ that
respects $\psi$ such that $s\gamma = t\delta$, and vice versa.
Let $\xR$ be an LCTRS and $s~\CO{\varphi}$ a constrained term. If
$s|_p = \ell\sigma$ for some constrained rewrite rule
$\rho\colon \crr{\ell}{r}{\psi}$, position $p$, and substitution
$\sigma$ such that $\sigma(x) \in \Val \cup \Var(\varphi)$ for all
$x \in \LVar(\rho)$, $\varphi$ is satisfiable and
$\varphi \Rightarrow \psi\sigma$ is valid then
\[
s~\CO{\varphi} \Rru s[r\sigma]_p~\CO{\varphi}
\]
is a \emph{rule step}. If $s|_p = f(\seq{s})$ with
$f \in \xFTh \setminus \Val$ and $\seq{s} \in \Val \cup \Var(\varphi)$
then
\[
s~\CO{\varphi} \Rca s[x]_p~\CO{\varphi \land x = f(\seq{s})}
\]
is a \emph{calculation step}. Here $x$ is a fresh variable.
We write $\RbR$ for ${\Rru} \cup {\Rca}$ and the rewrite relation
$\RbRs$ on constrained terms is defined as $\sim \cdot \RbR \cdot \sim$.
\end{definition}

Positions in connection with $\RbRs$ steps always refer to the
underlying steps in $\RbR$. We give an example of constrained
rewriting.

\begin{example}
Consider again the LCTRS $\xR$ of \exaref{example1}. We have
\begin{align*}
\m{max}(x + y, \m{6}) ~ \CO{x \geq \m{2} \land y \geq \m{4}}
&\RbR \m{max}(z, \m{6}) ~
\CO{x \geq \m{2} \land y \geq \m{4} \land z = x + y} \\
&\RbR z ~ \CO{x \geq \m{2} \land y \geq \m{4} \land z = x + y}
\end{align*}
The first step is a calculation step. The second step is a rule step
using the rule $\m{max}(x,y) \R x ~ \CO{x \geq y}$ with the
substitution $\sigma = \SET{x \mapsto z, y \mapsto \m{6}}$.
Note that the constraint
$(x \geq \m{2} \land y \geq \m{4} \land z = x + y) \Rightarrow
z \geq \m{6}$ is valid.
\end{example}

\begin{definition}
A critical pair $s \approx t~\CO{\varphi}$ is \emph{trivial} if
$s\sigma = t\sigma$ for every substitution $\sigma$ with
$\sigma \vDash \varphi$.%
\footnote{The triviality condition in \cite{KN13} is wrong. Here we use
the corrected version in an update of \cite{KN13} announced on
Cynthia Kop's website
(accessible at \url{https://www.cs.ru.nl/~cynthiakop/frocos13.pdf}).}
A left-linear LCTRS having only trivial critical pairs is called
\emph{weakly orthogonal}. A left-linear TRS without critical pairs
is called \emph{orthogonal}.
\end{definition}

The following result is from~\cite{KN13}.

\begin{theorem}
\label{thm:wo}
Weakly orthogonal LCTRS are confluent.
\qed
\end{theorem}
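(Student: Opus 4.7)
The plan is to lift Huet's classical parallel-rewriting argument for weakly orthogonal TRSs to the constrained setting. I would introduce a parallel rewrite relation $\PRbR$ that contracts a set of pairwise parallel redexes, including calculation redexes, in one step. Because $\RbR \subseteq \PRbR \subseteq \RabR{*}$, confluence of $\xR$ reduces to the diamond property of $\PRbR$: whenever $t_1 \rPR s \PR t_2$ there exists $u$ with $t_1 \PR u$ and $t_2 \rPR u$. Standard diagram chasing (Newman-style) then yields confluence.

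To establish the diamond property, I would proceed by induction on $s$ together with the number of contracted positions in the two parallel steps. Redexes at pairwise disjoint positions close trivially, since their residuals commute and can be contracted in either order. The delicate case arises when a redex using $\rho_1$ at position $p_1$ and one using $\rho_2$ at $p_2 = p_1 p$ overlap, with $p \in \FPOS(\ell_1)$. Left-linearity lets me combine the two rule substitutions into a single substitution $\tau$ that respects both rules simultaneously, exposing the local configuration as an instance of the critical pair obtained from the overlap $\overlap[p]{\rho_2}{\rho_1}$ under a substitution satisfying $\varphi_1\tau \land \varphi_2\tau \land \psi\tau$. Triviality of the critical pair then forces the two reducts to coincide on the spot, so the peak closes without further contractions.

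Calculation steps need analogous treatment: two calculation redexes at the same position agree by determinism of $\inter{\cdot}$; a calculation step cannot overlap a rule step at the root since $\m{root}(\ell) \in \xFTe \setminus \xFTh$ forbids theory symbols there; the remaining configurations are either disjoint or symmetric to the preceding case. The main obstacle is the interplay between extra variables in right-hand sides and the value-only instantiation requirement on logical variables: in the critical-peak closure above, a variable in $\EVar(\rho_i)$ must be assigned the \emph{same} value on both sides of the diamond. This is exactly what the auxiliary equations $\psi$ in \defref{ccp} guarantee, since they force any substitution respecting the critical-pair constraint to map such variables into $\Val$ before triviality is invoked. With this bookkeeping in place, the diamond property, and hence confluence, follows as in the unconstrained case.
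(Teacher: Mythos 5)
Your overall strategy---reduce confluence to the diamond property of a parallel rewrite relation that also contracts calculation redexes, and close critical overlaps by triviality after checking that the combined substitution respects the critical-pair constraint---is sound, and it matches the route the paper takes for the stronger Theorems~\ref{thm:pccp} and~\ref{thm:apccp}. (The paper does not prove \thmref{wo} itself but imports it from \cite{KN13}; note that within the paper it also follows in one line from \thmref{pccp}, since a trivial critical pair is parallel closed via an empty parallel step.) Your handling of the genuinely constrained issues is correct: the factoring substitution must map every variable of the critical-pair constraint to a value before triviality can be invoked, and the equations $\psi$ are precisely what place the variables of $\EVar(\rho_i)$ into that constraint.

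There is, however, a gap in your case analysis of the peak $t_1 \rPR s \PR t_2$. You treat (i) pairwise parallel redexes and (ii) nested redexes with $p_2 = p_1p$ and $p \in \FPOS(\ell_1)$, but you omit the variable-overlap case $p \notin \FPOS(\ell_1)$, where the inner redex sits inside $x\sigma_1$ for some $x \in \Var(\ell_1)$. This is not a critical overlap, yet it is where left-linearity actually earns its keep (contracting the outer redex leaves residuals of the inner one, all of which must be contracted in a single parallel step on the other side), and it carries a constrained subtlety of its own: one must verify that the modified substitution $\sigma_1'$, obtained by rewriting inside $x\sigma_1$, still respects $\rho_1$. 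The argument (cf.\ the first case of the proof of \thmref{sccp}) is that $x \notin \LVar(\rho_1)$, because logical variables are bound to values, which are constants and hence contain no redexes; so $\sigma_1'$ agrees with $\sigma_1$ on $\LVar(\rho_1)$ and validity of $\varphi_1\sigma_1'$ is inherited. A further minor point: in the critical-overlap case it is the variable-disjointness of the two rule variants, not left-linearity, that lets you merge $\sigma_1$ and $\sigma_2$ into a unifier of $\ell_1|_p$ and $\ell_2$; left-linearity is instead needed to decompose the parallel step against the pattern of $\ell_1$ and for the variable-overlap case just described.
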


\begin{example}
\label{exa:example6}
The following left-linear LCTRS computes the Ackermann function using
term symbols from
$\xFTe = \SET{\m{ack} : \m{int} \times \m{int} \Rightarrow \m{int}} \cup
\SET{\m{0}, \m{1}, \dots : \m{int}}$ and the same theory symbols,
carrier sets and interpretations as in~\exaref{example1}:
\begin{align*}
\m{ack}(\m{0},n) &\R n + \m{1} ~ \CO{n \geq \m{0}} \\
\m{ack}(m,\m{0}) &\R \m{ack}(m - \m{1},\m{1}) ~ \CO{m > \m{0}} \\
\m{ack}(m,n) &\R \m{ack}(m - \m{1},\m{ack}(m,n - \m{1})) ~
\CO{m > \m{0} \land n > \m{0}} \\
\m{ack}(m,n) &\R \m{0} ~ \CO{m < \m{0} \lor n < \m{0}}
\end{align*}
Since the conjunction of any two constraints is unsatisfiable, $\xR$
lacks critical pairs. Hence $\xR$ is confluent by \thmref{wo}.
\end{example}

The following result is proved in \cite{WM18} and forms the basis
of completion of LCTRSs.

\begin{lemma}
\label{lem:cpl}
Let $\xR$ be an LCTRS. If $t \LbR s \RbR u$ then $t \JbR u$
or $\smash{t \xLRb[\CCP(\xR)] u}$.
\qed
\end{lemma}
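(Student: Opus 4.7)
The plan is to follow the standard proof of the critical pair lemma from plain term rewriting, refined to take the logical constraints, calculation steps, and (logical/extra) variable restrictions into account. Fix a local peak $t \Lb[p_1,\rho_1,\sigma_1] s \Rb[p_2,\rho_2,\sigma_2] u$ with $\rho_i\colon \ell_i \R r_i~\CO{\varphi_i}$ a rule in $\xR \cup \xRca$ and $\sigma_i \vDash \rho_i$, where we may assume the $\rho_i$ are variable-disjoint variants. We then do a case analysis on the relative position of $p_1$ and $p_2$.

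If $p_1 \parallel p_2$, the two contractions affect disjoint subterms, so applying the other rewrite step on either side of the peak yields the common reduct $s[r_1\sigma_1]_{p_1}[r_2\sigma_2]_{p_2}$. Because $\sigma_1 \vDash \rho_1$ and $\sigma_2 \vDash \rho_2$ are undisturbed by a rewrite step at a parallel position, both closing steps are legal, giving $t \JbR u$. Next, if (without loss of generality) $p_1 \leqslant p_2$ and $p_2 = p_1 q q'$ with $q \in \VPOS(\ell_1)$ and $\ell_1|_q = x$, the step at $p_2$ lies inside the subterm $\sigma_1(x)$. Here we apply the usual trick: define $\sigma_1'$ by rewriting every occurrence of $\sigma_1(x)$ in parallel at the corresponding position with $\rho_2$, and leave $\sigma_1$ unchanged on the other variables; on $t$ we reduce each of the (possibly several) copies of $\sigma_1(x)$ with $\rho_2$, while on $u$ we take the single $\rho_1$-step with substitution $\sigma_1'$. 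Two points require care in the LCTRS setting: first, if $x \in \LVar(\rho_1)$, then $\sigma_1(x) \in \Val$, so there is no $\rho_2$-redex inside $\sigma_1(x)$ at all and this subcase reduces to the parallel one; second, $\sigma_1'$ still respects $\rho_1$ because values are preserved under $\RbR$ (calculation only produces values from values) and the constraint of $\rho_1$ uses only logical variables.

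In the remaining case $p_1 \leqslant p_2$ with $p_2 \backslash p_1 \in \FPOS(\ell_1)$ (or the symmetric situation), we have a genuine overlap. Writing $p = p_2 \backslash p_1$, the terms $\ell_1|_p$ and $\ell_2$ are unified by the combined substitution, hence admit a mgu $\tau$ with $\sigma_i = \tau\delta$ for some $\delta$. The condition $\sigma_i \vDash \rho_i$ forces $\tau(x) \in \Val \cup \xV$ for all $x \in \LVar(\rho_1) \cup \LVar(\rho_2)$, and satisfiability of $\varphi_1\tau \land \varphi_2\tau$ is witnessed by $\delta$; the root condition in item~(5) of \defref{ccp} holds because otherwise $\rho_1$ and $\rho_2$ would be variants with no extra variables, making the peak trivial. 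Thus $\overlap{\rho_2}{\rho_1}$ is an overlap and the peak $t \L s \R u$ is the $\delta$-instance of the critical pair $\ell_1\tau[r_2\tau]_p \approx r_1\tau~\CO{\varphi_1\tau \land \varphi_2\tau \land \psi\tau}$. The extra constraint $\psi$ is exactly what is needed so that $\delta$ maps each extra variable to a value, matching the validity of $\sigma_i \vDash \rho_i$.

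The main obstacle is the last case together with the extra-variable bookkeeping: one has to verify that the factorisation $\sigma_i = \tau\delta$ respects all the sorting, value, and satisfiability side conditions of \defref{ccp}, and that $\delta$ instantiates the extra variables collected in $\psi$ by values. Without the $\psi$-conjunct this would fail, since then the instance witnessing $t \xLRb[\CCP(\xR)] u$ could assign non-values to such variables, breaking the correspondence with the original steps. Once this check is carried out (and the symmetric overlap at position $p_1 \geqslant p_2$ is handled analogously, swapping the roles of $\rho_1$ and $\rho_2$), the three cases together give the required dichotomy $t \JbR u$ or $t \xLRb[\CCP(\xR)] u$.
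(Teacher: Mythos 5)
Your proof is correct and takes essentially the same route as the one the paper relies on: the lemma is quoted from \cite{WM18} without proof, but your case analysis (parallel positions, variable overlap, overlap in $\FPOS(\ell_1)$) together with the LCTRS-specific checks — logical variables forcing values, the mgu factorisation $\sigma' = \tau\delta$, and the $\psi$-conjunct guaranteeing $\delta$ respects the critical pair's constraint — mirrors exactly the argument the paper spells out for \thmref{sccp}. One small imprecision: in the variable-overlap case with $\ell_1$ non-linear in $x$, the term $u$ is not yet an $\ell_1\sigma_1'$-instance, so the remaining occurrences of $x\sigma_1$ in $\ell_1\sigma_1$ must first be reduced by further $\rho_2$-steps before your ``single $\rho_1$-step with $\sigma_1'$'' applies — harmless here since only $\JbR$, not a one-step closing, is claimed.
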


In combination with Newman's Lemma, the following confluence criterion
is obtained.

\begin{corollary}
\label{cor:jcp}
A terminating LCTRS is confluent if all critical pairs are joinable.
\end{corollary}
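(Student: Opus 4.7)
The plan is to reduce the corollary to two ingredients that are already in hand: Newman's Lemma (which holds for any abstract rewrite relation) and \lemref{cpl}. Since $\xR$ is terminating, $\RbR$ is well-founded, so by Newman's Lemma it suffices to verify local confluence, namely that every local peak $t \LbR s \RbR u$ satisfies $t \JbR u$. I would therefore fix such a peak and invoke \lemref{cpl} to obtain either $t \JbR u$ (in which case there is nothing to do) or $t \xLRb[\CCP(\xR)] u$, meaning that there is a critical pair $s' \approx t'~\CO{\varphi} \in \CCP(\xR)$ together with a substitution $\sigma \vDash \varphi$ such that (up to symmetry) $t = s'\sigma$ and $u = t'\sigma$.

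For the critical-pair case the goal is to turn the hypothesis that $s' \approx t'~\CO{\varphi}$ is joinable into an actual join $t \JbR u$. Under the most natural reading of joinability of a critical pair in the constrained setting, the assumption gives constrained rewrite sequences $s'~\CO{\varphi} \RabRs{*} w$ and $t'~\CO{\varphi} \RabRs{*} w$ for some common constrained term $w$. I would then apply the standard substitution lemma for constrained rewriting: since $\sigma \vDash \varphi$, every step of the form $a~\CO{\varphi} \RbR b~\CO{\varphi'}$ projects to an ordinary step $a\sigma \RbR b\sigma$, and similarly for equivalence~$\sim$. Composing, $s'\sigma$ and $t'\sigma$ rewrite to a common instance of~$w$, giving $t \JbR u$.

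The main obstacle is precisely this lifting argument. One has to check that the constraint accumulated along a constrained rewrite sequence remains satisfied by $\sigma$ (so that each step genuinely survives instantiation), that calculation steps in the constrained setting correspond to calculation steps of $\Rca$ after applying~$\sigma$, and that the dummy equations $x = x$ added to $\varphi$ for extra variables (recall the role of $\psi$ in \defref{ccp}) force $\sigma$ to map those variables to values, as required by the semantics of rule steps. Once these bookkeeping points are settled, local confluence follows and Newman's Lemma delivers confluence of~$\xR$.
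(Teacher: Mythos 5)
Your top-level decomposition (Newman's Lemma plus \lemref{cpl}) is exactly the route the paper takes, but the way you discharge the critical-pair case contains a genuine gap: the ``most natural reading'' of joinability that you adopt --- two separate constrained rewrite sequences $s'~\CO{\varphi} \RabRs{*} w$ and $t'~\CO{\varphi} \RabRs{*} w$ meeting at a common constrained term $w$ --- is precisely the definition the paper rules out, and with it the corollary is false. \exaref{example2} is a counterexample: the terminating, non-confluent LCTRS with rules $\m{f}(x,y) \R \m{g}(x,\m{1}+\m{1})$ and $\m{h}(\m{f}(x,y)) \R \m{h}(\m{g}(y,\m{1}+\m{1}))$ has a single critical pair whose two sides both reach the common constrained term $\m{h}(\m{g}(y,v))~\CO{v = \m{1}+\m{1}}$ (using a $\sim$-step, since $x$ and $y$ do not occur in the constraints), so your joinability hypothesis is satisfied, yet the peak does not join. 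The same phenomenon breaks your lifting step: instantiating a constrained step $a~\CO{\varphi} \RbRs b~\CO{\varphi'}$ by $\gamma \vDash \varphi$ yields $a\gamma \RbR b\delta$ for some \emph{new} substitution $\delta \vDash \varphi'$ (\lemref{key}), not $b\gamma$; the two joining sequences therefore end in $w\delta_1$ and $w\delta_2$ for possibly different $\delta_1, \delta_2$, and nothing forces $w\delta_1 = w\delta_2$ when $w$ contains variables unconstrained by its constraint.

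The fix, which is the point of the paper's discussion following the corollary, is to define joinability on the constrained \emph{pair}: view $s' \approx t'~\CO{\varphi}$ as a single constrained term (with $\approx$ a binary constructor), require $s' \approx t'~\CO{\varphi} \RabRs{*} u \approx v~\CO{\psi}$ for some \emph{trivial} $u \approx v~\CO{\psi}$, and then lift the whole sequence at once with \lemref{key}. This produces a single substitution $\delta \vDash \psi$ with $s'\gamma \RabR{*} u\delta$ and $t'\gamma \RabR{*} v\delta$, and triviality gives $u\delta = v\delta$, hence $t \JbR u$ after closing under the surrounding context. Your bookkeeping concerns (constraints surviving instantiation, calculation steps, the dummy equations forcing extra variables to values) are legitimate, but they are subordinate to getting the definition of joinability right; as stated, your argument proves a false statement.
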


This is less obvious than it seems. Joinability of a critical pair
$s \approx t~\CO{\varphi}$ cannot simply be defined as
$s~\CO{\varphi} \RabRs{*} \cdot \LabRs{*} t~\CO{\varphi}$,
as the following example shows.

\begin{example}
\label{exa:example2}
Consider the terminating LCTRS $\xR$ consisting of the rewrite rules
\begin{align*}
\m{f}(x,y) &\R \m{g}(x,\m{1} + \m{1}) &
\m{h}(\m{f}(x,y)) &\R \m{h}(\m{g}(y,\m{1} + \m{1}))
\end{align*}
The single critical pair
$\m{h}(\m{g}(x,\m{1} + \m{1})) \approx \m{h}(\m{g}(y,\m{1} + \m{1}))$
should not be joinable because $\xR$ is not confluent, but we do have
\begin{align*}
\m{h}(\m{g}(x,\m{1} + \m{1}))
&\Rca \m{h}(\m{g}(x,z))~\CO{z = \m{1} + \m{1}} 
\sim \m{h}(\m{g}(y,v))~\CO{v = \m{1} + \m{1}} \\
\m{h}(\m{g}(y,\m{1} + \m{1}))
&\Rca \m{h}(\m{g}(y,v))~\CO{v = \m{1} + \m{1}}
\end{align*}
due to the equivalence relation $\sim$ on constrained terms; since
$x$ and $y$ do not appear in the constraints, there is no demand that
they must be instantiated with values.
\end{example}

The solution is not to treat the two sides of a critical pair in
isolation but define joinability based on rewriting constrained
term pairs. So we view the symbol $\approx$ in a constrained equation
$s \approx t~\CO{\varphi}$ as a binary constructor symbol such that
the constrained equation can be viewed as a constrained term. Steps in
$s$ take place at positions $\geqslant 1$ whereas steps in $t$
use positions $\geqslant 2$. The same is done in completion of
LCTRSs~\cite{WM18}.

\begin{definition}
We call a constrained equation $s \approx t~\CO{\varphi}$ \emph{trivial}
if $s\sigma = t\sigma$ for any substitution $\sigma$ with
$\sigma \vDash \varphi$.
A critical pair $s \approx t~\CO{\varphi}$ is \emph{joinable} if
$s \approx t~\CO{\varphi} \RabRs{*} u \approx v~\CO{\psi}$ and
$u \approx v~\CO{\psi}$ is trivial.
\end{definition}

We revisit \exaref{example2}.

\begin{example}
For the critical pair in \exaref{example2} we obtain
\begin{align*}
\m{h}(\m{g}(x,&\m{1} + \m{1})) \approx \m{h}(\m{g}(y,\m{1} + \m{1})) \\
&\Rca \m{h}(\m{g}(x,v)) \approx \m{h}(\m{g}(y,\m{1} + \m{1}))~
\CO{v = \m{1} + \m{1}} \\
&\Rca \m{h}(\m{g}(x,v)) \approx \m{h}(\m{g}(y,z))~
\CO{v = \m{1} + \m{1} \land z = \m{1} + \m{1}}
\end{align*}
The substitution $\sigma = \SET{v \mapsto \m{2}, z \mapsto \m{2}}$
respects the constraint $v = \m{1} + \m{1} \land z = \m{1} + \m{1}$
but does not equate $\m{h}(\m{g}(x,v))$ and $\m{h}(\m{g}(y,z))$.
\end{example}

The converse of \corref{jcp} also holds, but note that in contrast to
TRSs, joinability of critical pairs is not a decidable criterion for
terminating LCTRSs, due to the undecidable triviality condition.
Moreover, for the converse to hold, it is essental that critical pairs
contain the trivial equations $\psi$ in \defref{ccp}.

\begin{example}
\label{exa:new}
Consider the LCTRS $\xR$ consisting of the rules
\begin{align*}
\m{f}(x) &\R \m{g}(y) & \m{g}(y) &\R \m{a}~\CO{y = y}
\end{align*}
which admits the critical pair
$\m{g}(y) \approx \m{g}(y')~\CO{y = y \land y' = y'}$
originating from the overlap
$\overlap[\epsilon]{\m{f}(x) \R \m{g}(y)}{\m{f}(x') \R \m{g}(y')}$.
This critical pair is joinable as $y$ and $y'$ are restricted to
values and thus both sides rewrite to $\m{a}$ using the second rule.
As $\xR$ is also terminating, it is confluent by \corref{jcp}. If we
were to drop $\psi$ in \defref{ccp}, we would obtain
the non-joinable critial pair $\m{g}(y) \approx \m{g}(y')$ instead
and wrongly conclude non-confluence.
\end{example}

\section{Main Results}
\label{sec:results}

We start with extending a confluence result of Huet~\cite{H80} for linear
TRSs. Below we write $\Rb[\geqslant p]$ to indicate that the position of
the contracted redex in the step is below position $p$.

\begin{definition}
A critical pair $s \approx t~\CO{\varphi}$ is \emph{strongly closed} if
\begin{enumerate}
\item
$s \approx t~\CO{\varphi}
\mr{\,\Rs_{\geqslant 1}^* \cdot \,\Rs_{\geqslant 2}^=\,}
u \approx v~\CO{\psi}$ for some trivial $u \approx v~\CO{\psi}$, and
\smallskip
\item
$s \approx t~\CO{\varphi}
\mr{\,\Rs_{\geqslant 2}^* \cdot \,\Rs_{\geqslant 1}^=\,}
u \approx v~\CO{\psi}$ for some trivial $u \approx v~\CO{\psi}$.
\end{enumerate}
\end{definition}

A binary relation $\R$ on terms is \emph{strongly confluent} if
$t \Ra[*] \cdot \La[=] u$ for all terms $s$, $t$ and $u$ with
$t \L s \R u$. (By symmetry, also $t \Ra[=] \cdot \La[*] u$ is required.)
Strong confluence is a well-known sufficient condition for
confluence. Huet~\cite{H80} proved that linear TRSs are
strongly confluent if all critical pairs are strongly closed. Below
we extend this result to LCTRSs, using the above definition of
strongly closed constrained critical pairs.

\begin{theorem}
\label{thm:sccp}
A linear LCTRS is strongly confluent if all its critical pairs are
strongly closed.
\end{theorem}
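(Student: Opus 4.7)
The plan is to establish strong confluence of $\RbR$ by case analysis on an arbitrary local peak $t \Rb[p_1,\rho_1,\sigma_1] s \Rb[p_2,\rho_2,\sigma_2] u$ with $\rho_1,\rho_2 \in \xR \cup \xRca$, deriving $t \RabR{*} \cdot \LabR{=} u$ (the symmetric shape follows by swapping the peak). When $p_1 \parallel p_2$ the two steps are independent and close the diagram in one step per side. When, without loss of generality, $p_1 \leqslant p_2$ and $q = p_2 \setminus p_1$ lies inside the instance $\sigma_1(x)$ of a variable $x$ of $\ell_1$, the possibility $x \in \LVar(\rho_1)$ is excluded, because then $\sigma_1(x) \in \Val$ would be $\RbR$-irreducible. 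Otherwise the standard linear-system argument applies: left-linearity of $\ell_1$ allows updating $\sigma_1$ to $\sigma_1'$ by mapping $x$ to the $\rho_2$-reduct of $\sigma_1(x)$, and $\rho_1$ still applies under $\sigma_1'$ since $\varphi_1$ does not mention $x$; right-linearity of $r_1$ then limits the number of mimicked $\rho_2$-steps inside $r_1\sigma_1$ to at most one, yielding $t \RabR{=} w$ and $u \RbR w$ for the common reduct $w = s[r_1\sigma_1']_{p_1}$.

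The substantial case is when $q \in \FPOS(\ell_1)$, giving rise to a proper overlap. By \lemref{cpl} the peak is an instance of a critical pair $c \approx d~\CO{\varphi}$ originating from $\overlap{\rho_1}{\rho_2}$ or, in the overlay case, from either $\overlap{\rho_1}{\rho_2}$ or $\overlap{\rho_2}{\rho_1}$. Invoking the strong closedness condition matching the orientation of the peak yields
\[
c \approx d~\CO{\varphi}
\mr{\,\Rs_{\geqslant 1}^* \cdot \,\Rs_{\geqslant 2}^=\,}
e \approx e'~\CO{\psi}
\]
with $e \approx e'~\CO{\psi}$ trivial (or the symmetric form). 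Lifting this derivation along the grounding substitution $\tau$ that realises the concrete peak (so $\tau \vDash \varphi$ and $c\tau$, $d\tau$ match the relevant sides of the peak up to the surrounding context), each $\Rs$-step on the constrained equation transfers into a genuine $\RbR$-step on the corresponding side of the peak, and triviality of the final equation forces the two instantiated meeting terms to coincide, producing the strong confluence diagram $t \RabR{*} \cdot \LabR{=} u$.

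The main obstacle is this lifting step, because the relation $\RbRs$ on constrained terms incorporates the equivalence $\sim$ and introduces fresh variables during calculation steps, neither of which is present in object-level rewriting. The plan is to extend $\tau$ progressively along the derivation: each calculation step $s'~\CO{\varphi'} \Rca s'[x]_{q'}~\CO{\varphi' \land x = f(\seq{v})}$ introduces a fresh $x$, which I interpret as the value $\inter{f(\seq{v})\tau}$ so that the extended $\tau$ still respects the enlarged constraint; each rule step transfers using that validity of $\varphi' \Rightarrow \psi'\sigma'$ entails $\sigma'\tau \vDash \psi'$; and $\sim$-steps disappear under instantiation. Crucially, the trivial equations $\psi$ attached to critical pairs by \defref{ccp} guarantee that variables of $c \approx d$ originating from $\EVar(\rho_1) \cup \EVar(\rho_2)$ are mapped by $\tau$ to values, matching the value-only restriction on logical variables in the rule steps produced by the lifted derivation.
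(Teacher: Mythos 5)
Your proposal follows essentially the same route as the paper's proof: the same case analysis on the local peak (parallel positions, overlap inside a variable instance handled via left- and right-linearity with the observation that the overlapped variable cannot be logical, and a proper overlap yielding a critical-pair instance whose strong-closedness derivation is lifted back to unconstrained rewriting), with your progressive extension of $\tau$ being exactly the content of \lemref{key}, and your final remark on the $\psi$-equations matching the paper's argument that the instantiating substitution respects the critical pair's constraint. The only cosmetic differences are that you invoke \lemref{cpl} rather than re-verifying the overlap conditions directly (which silently absorbs the variant-overlay case where no critical pair exists but $t=u$) and sketch the lifting lemma instead of citing it.
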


We give full proof details in order to illustrate the complications
caused by constrained rewrite rules. The following result from
\cite{WM18} plays an important role.

\begin{lemma}
\label{lem:key}
Suppose $s \approx t~\CO{\varphi} \Rbs[p] u \approx v~\CO{\psi}$
and $\gamma \vDash \varphi$. If $p \geqslant 1$ then $s\gamma \R u\delta$
and $t\gamma = v\delta$ for some substitution $\delta$ with
$\delta \vDash \psi$. If $p \geqslant 2$ then $s\gamma = u\delta$ and
$t\gamma \R v\delta$ for some substitution $\delta$ with
$\delta \vDash \psi$.
\qed
\end{lemma}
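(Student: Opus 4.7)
The plan is to unfold the definition of $\Rbs$ and reduce the claim to a statement about the single underlying $\Rb[p]$ step, transporting $\gamma$ through the surrounding $\sim$-equivalences. By definition there exist constrained equations $s' \approx t'~\CO{\varphi'}$ and $u' \approx v'~\CO{\psi'}$ with
\[
s \approx t~\CO{\varphi} \sim s' \approx t'~\CO{\varphi'} \Rb[p] u' \approx v'~\CO{\psi'} \sim u \approx v~\CO{\psi}.
\]
From $\gamma \vDash \varphi$ the first $\sim$-equivalence yields some $\gamma' \vDash \varphi'$ with $s\gamma = s'\gamma'$ and $t\gamma = t'\gamma'$. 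I focus on $p \geqslant 1$; the case $p \geqslant 2$ is entirely symmetric (the step then happens in $t'$ and leaves $s'$ untouched). So $v' = t'$ and the step operates at position $p \backslash 1$ inside $s'$.

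In the rule-step case the redex is $\ell\sigma$ for some constrained rewrite rule $\rho\colon \crr{\ell}{r}{\chi}$ and substitution $\sigma$ with $\sigma(x) \in \Val \cup \Var(\varphi')$ for every $x \in \LVar(\rho)$, $\varphi' \Rightarrow \chi\sigma$ valid, and $\psi' = \varphi'$. Define $\tau$ by $\tau(x) = \sigma(x)\gamma'$. For every logical variable $x$ of $\rho$, either $\sigma(x)$ is already a value, or $\sigma(x) \in \Var(\varphi')$ and then $\gamma'(\sigma(x)) \in \Val$ because $\gamma' \vDash \varphi'$; in both cases $\tau(x) \in \Val$. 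Validity of $\chi\tau$ follows from validity of $\varphi'\gamma'$ together with the implication $\varphi' \Rightarrow \chi\sigma$. Thus $\tau$ respects $\rho$, and applying it at position $p \backslash 1$ yields $s\gamma = s'\gamma' \R u'\gamma'$, while $t\gamma = t'\gamma' = v'\gamma'$. The second $\sim$-equivalence, applied to $\gamma' \vDash \psi'$, then produces $\delta \vDash \psi$ with $u\delta = u'\gamma'$ and $v\delta = v'\gamma'$, as required.

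For a calculation step the redex is $f(v_1,\ldots,v_n)$ with $f \in \xFTh \setminus \Val$ and each $v_i \in \Val \cup \Var(\varphi')$, while $\psi' = \varphi' \land x = f(v_1,\ldots,v_n)$ with $x$ fresh. I extend $\gamma'$ to $\gamma''$ by setting $\gamma''(x) = \inter{f(v_1,\ldots,v_n)\gamma'}$; this is well defined since each $v_i\gamma'$ is a value, so the argument is a ground theory term. Then $\gamma'' \vDash \psi'$, and $s\gamma = s'\gamma' \R u'\gamma''$ is a genuine calculation step on the ground instance. Freshness of $x$ gives $t\gamma = t'\gamma' = t'\gamma'' = v'\gamma''$, and the second $\sim$ again supplies the required $\delta$.

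The main obstacle is the bookkeeping around logical variables and values: one must verify at each $\sim$-bridge and across the composition $\tau = \sigma\gamma'$ that every substitution still maps logical variables to values and respects the appropriate constraint, and in the calculation case that the fresh variable introduced in $\psi'$ can be extended without disturbing $\gamma'$.
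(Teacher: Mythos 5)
Your proposal is correct, and there is nothing in the paper to diverge from: the paper states this lemma without proof, importing it from \cite{WM18} (it is the analogue of Lemma~8 there). Your argument --- unfolding $\Rbs[p]$ as $\sim \cdot \Rb[p] \cdot \sim$, transporting $\gamma$ through the first equivalence, composing substitutions ($\tau = \sigma\gamma'$, with the check that logical variables still land in $\Val$ and that $\varphi' \Rightarrow \chi\sigma$ turns into validity of the ground instance), extending by the fresh variable in the calculation case, and transporting back through the second equivalence --- is exactly the standard lifting proof behind that cited result, including the detail the paper later relies on in \lemref{lifting}, namely that the unconstrained step happens at the same position as the underlying constrained one.
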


\begin{proof}[of \thmref{sccp}]
Consider an arbitrary local peak
\[
t \L_{p_1\sh\rho_1\sh\sigma_1} s \Rb[p_2\sh\rho_2\sh\sigma_2] u
\]
with rewrite rules $\rho_1\colon \crr{\ell_1}{r_1}{\varphi_1}$ and
$\rho_2\colon \crr{\ell_2}{r_2}{\varphi_2}$ from $\xR \cup \xRca$.
We may assume that $\rho_1$ and $\rho_2$ have no variables in common, and
consequently $\Dom(\sigma_1) \cap \Dom(\sigma_2) = \varnothing$.
We have $s|_{p_1} = \ell_1\sigma_1$, $t = s[r_1\sigma_1]_{p_1}$ and
$\sigma_1 \vDash \varphi_1$. Likewise, $s|_{p_2} = \ell_2\sigma_2$,
$u = s[r_2\sigma_2]_{p_2}$ and $\sigma_2 \vDash \varphi_2$.
If $p_1 \parallel p_2$ then
\[
t \Rb[p_2\sh\rho_2\sh\sigma_2] t[r_2\sigma_2]_{p_2} = u[r_1\sigma_1]_{p_1}
\L_{p_1\sh\rho_1\sh\sigma_1} u
\]
Hence both $t \Ra[*] \cdot \La[=] u$ and $t \Ra[=] \cdot \La[*] u$.
If $p_1$ and $p_2$ are not parallel then $p_1 \leqslant p_2$ or
$p_2 < p_1$. Without loss of generality, we consider $p_1 \leqslant p_2$.
Let $q = p_2 \backslash p_1$. We do a case analysis on whether or not
$q \in \FPOS(\ell_1)$.
\begin{itemize}
\item
First suppose $q \notin \FPOS(\ell_1)$. Let $q = q_1q_2$ such that
$q_1 \in \VPOS(\ell_1)$ and let $x$ be the variable in $\ell_1$ at
position $q_1$. We have $\ell_2\sigma_2 = x\sigma_1|_{q_2}$ and thus
$\sigma_1(x) \notin \Val$. Define the substitution $\sigma_1'$ as follows:
\[
\sigma_1'(y) = \begin{cases}
x\sigma_1[r_2\sigma_2]_{q_2} &\text{if $y = x$} \\
\sigma_1(y) &\text{otherwise}
\end{cases}
\]
We show $t \Ra[=] s[r_1\sigma_1']_{p_1} \L u$, which yields
$t \Ra[*] \cdot \La[=] u$ and $t \Ra[=] \cdot \La[*] u$.
Since $\xR$ is left-linear, $\ell_1\sigma_1' =
\ell_1\sigma_1[x\sigma_1']_{q_1} =
\ell_1\sigma_1[x\sigma_1[r_2\sigma_2]_{q_2}]_{q_1} =
\ell_1\sigma_1[r_2\sigma_2]_q$
and thus
$u = s[r_2\sigma_2]_{p_2} = s[\ell_1\sigma_1[r_2\sigma_2]_q]_{p_1} =
s[\ell_1\sigma_1']_{p_1}$.
If we can show $\sigma_1' \vDash \rho_1$ then
$u \R s[r_1\sigma_1']_{p_1}$.
Consider an arbitrary variable $y \in \LVar(\rho_1)$. If $y \neq x$ then
$\sigma_1'(y) = \sigma_1(y) \in \Val$ since $\sigma_1 \vDash \rho_1$.
If $y = x$ then $x \in \Var(\varphi)$ since $x \in \Var(\ell_1)$.
However, this contradicts $\sigma_1 \vDash \rho_1$ as
$\sigma_1(x) \notin \Val$.
So $\sigma_1'(y) = \sigma_1(y)$ for all
$y \in \LVar(\rho_1)$ and thus $\sigma_1' \vDash \rho_1$ is an immediate
consequence of $\sigma_1 \vDash \rho_1$.
It remains to show $t \Ra[=] s[r_1\sigma_1']_{p_1}$. If
$x \notin \Var(r_1)$ then $r_1\sigma_1' = r_1\sigma_1$ and thus
$t = s[r_1\sigma_1']_{p_1}$. If $x \in \Var(r_1)$ then there exists
a unique position $q' \in \VPOS(r_1)$ such that $r_1|_{q'} = x$,
due to the right-linearity of $\xR$. Hence
$r_1\sigma_1' = r_1\sigma_1[x\sigma_1[r_2\sigma_2]_{q_2}]_{q'} =
r_1\sigma_1[r_2\sigma_2]_{q'q_2}$.
Since $r_1\sigma_1|_{q'q_2} = \ell_2\sigma_2$ we obtain
\(
t = s[r_1\sigma_1]_{p_1} \Rb[p_1q'q_2\sh\rho_2\sh\sigma_2]
s[r_1\sigma_1']_{p_1}
\)
as desired.
\smallskip
\item
Next suppose $q \in \FPOS(\ell_1)$.
The substitution $\sigma' = \sigma_1 \cup \sigma_2$ satisfies
$\ell_1|_q\sigma' = \ell_1|_q\sigma_1 = \ell_2\sigma_2 = \ell_2\sigma'$
and thus is a unifier of $\ell_1|_q$ and $\ell_2$. Since
$\sigma_1 \vDash \rho_1$ and $\sigma_2 \vDash \rho_2$,
$\sigma'(x) \in \Val$ for all $x \in \LVar(\rho_1) \cup \LVar(\rho_2)$.
Let $\sigma$ be an mgu of $\ell_1|_q$ and $\ell_2$. Since $\sigma$ is at
least as general as $\sigma'$, $\sigma(x) \in \Val \cup \xV$ for all
$x \in \LVar(\rho_1) \cup \LVar(\rho_2)$.
Since $\varphi_1\sigma' = \varphi_1\sigma_1$ and
$\varphi_2\sigma' = \varphi_2\sigma_2$ are valid,
$\varphi_1\sigma \land \varphi_2\sigma$ is satisfiable.
Hence conditions 1, 2, 3 and 4 in \defref{ccp} hold for the triple
$\overlap[q]{\rho_2}{\rho_1}$.
If condition 5 is \emph{not} fulfilled then
$q = \epsilon$ (and thus $p_1 = p_2$), $\rho_2$ and $\rho_1$ are
variants, and $\Var(r_2) \subseteq \Var(\ell_2)$ (and thus also
$\Var(r_1) \subseteq \Var(\ell_1)$). Hence
$\ell_1\sigma_1 = \ell_2\sigma_2$ and
$r_1\sigma_1 = r_2\sigma_2$, and thus $t = u$.
In the remaining case condition 5 holds and hence
$\overlap[q]{\rho_2}{\rho_1}$ is an overlap. By definition,
$\ell_1\sigma[r_2\sigma]_q \approx r_1\sigma~
\CO{\varphi_2\sigma \land \varphi_1\sigma \land \psi\sigma}$
with 
\[
\psi = \bigwedge~\SET{x = x \mid x \in \EVar(\rho_1) \cup \EVar(\rho_2)}
\]
is a critical pair. To simplify the notation, we abbreviate
$\ell_1\sigma[r_2\sigma]_q$ to $s'$, $r_1\sigma$ to $t'$, and
$\varphi_2\sigma \land \varphi_1\sigma \land \psi\sigma$ to $\varphi'$.
Critical pairs are strongly closed by assumption, and thus both
\begin{enumerate}
\medskip
\item
$s' \approx t'~\CO{\varphi'}
\mr{\,\Rs_{\geqslant 1}^* \cdot \,\Rs_{\geqslant 2}^=\,}
u \approx v~\CO{\psi'}$ for some trivial $u \approx v~\CO{\psi'}$, and
\smallskip
\item
$s' \approx t'~\CO{\varphi'}
\mr{\,\Rs_{\geqslant 2}^* \cdot \,\Rs_{\geqslant 1}^=\,}
u \approx v~\CO{\psi'}$ for some trivial $u \approx v~\CO{\psi'}$.
\end{enumerate}
\smallskip
Let $\gamma$ be the substitution such that $\sigma\gamma = \sigma'$. We
claim that $\gamma$ respects $\varphi'$. So let $x \in \Var(\varphi') =
\Var(\varphi_2\sigma \land \varphi_1\sigma \land \psi\sigma)$. We have
\begin{align*}
\LVar(\rho_1) &= \Var(\varphi_1) \cup \EVar(\rho_1) &
\LVar(\rho_2) &= \Var(\varphi_2) \cup \EVar(\rho_2)
\end{align*}
Together with $\Var(\psi) = \EVar(\rho_1) \cup \EVar(\rho_2)$
we obtain
\[
\LVar(\rho_1) \cup \LVar(\rho_2) = \Var(\varphi_1) \cup
\Var(\varphi_2) \cup \Var(\psi)
\]
Since $\sigma'(x) \in \Val$ for all
$x \in \LVar(\rho_1) \cup \LVar(\rho_2)$, we obtain $\gamma(x) \in \Val$
for all $x \in \Var(\varphi')$ and thus $\gamma \vDash \varphi'$.
At this point repeated applications of \lemref{key} to the constrained
rewrite sequence in item 1 yields a substitution
$\delta$ respecting $\psi'$ such that $s'\gamma \Ra[*] u\delta$ and
$t'\gamma = v\delta$. Since $u \approx v~\CO{\psi'}$ is trivial,
$u\delta = v\delta$ and hence
$s'\gamma \Ra[*] \cdot \La[=] t'\gamma$. Likewise,
$s'\gamma \Ra[=] \cdot \La[*] t'\gamma$ is obtained from
item 2. We have
\begin{align*}
s'\gamma &= (\ell_1\sigma[r_2\sigma]_q)\gamma =
\ell_1\sigma'[r_2\sigma']_q = \ell_1\sigma_1[r_2\sigma_2]_q &
t'\gamma &= r_1\sigma' = r_1\sigma_1
\end{align*}
Moreover, $t = s[r_1\sigma_1]_{p_1} = s[t'\gamma]_{p_1}$ and
$u = s[\ell_1\sigma_1[r_2\sigma_2]_q]_{p_1} = s[s'\gamma]_{p_1}$.
Since rewriting is closed under contexts, we obtain
$u \Ra[*] \cdot \La[=] t$ and $u \Ra[=] \cdot \La[*] t$. This completes
the proof.
\qed
\end{itemize}
\end{proof}

\begin{example}
Consider the LCTRS $\xR$ of \exaref{example1} and its critical pairs
in \exaref{example1 ccp}. The critical pair
\[
x \approx \m{max}(y,x)~\CO{x \geq y}
\]
is not trivial, so \thmref{wo} is not applicable and the rule
$\m{max}(x,y) \R \m{max}(y,x)$ precludes the use of \corref{jcp} to infer
confluence. We do have
\[
x \approx \m{max}(y,x)~\CO{x \geq y} ~\xRa[\geqslant 2]~
x \approx x~\CO{x \geq y}
\]
by applying the rule $\m{max}(x,y) \R y ~ \CO{y \geq x}$ and the
resulting
constrained equation $x \approx x~\CO{x \geq y}$ is obviously trivial.
The same reasoning applies to the critical pair
$y \approx \m{max}(y,x)~\CO{y \geq x}$. The first
critical pair $x \approx y~\CO{x \geq y \land y \geq x}$
in \exaref{example1 ccp} is trivial since any (value) substitution
satisfying its constraint $x \geq y \land y \geq x$
equates $x$ and $y$. By symmetry, all critical pairs of $\xR$
are strongly closed. Since $\xR$ is linear, confluence follows from
\thmref{sccp}.
\end{example}

The second main result is the extension of Huet's parallel closedness
condition on critical pairs in left-linear TRSs~\cite{H80} to LCTRSs.
To this end, we first define parallel rewriting for LCTRSs.

\begin{definition}
\label{def:parallel rewriting}
Let $\xR$ be an LCTRS. The relation $\PRbR$ is defined on terms
inductively as follows:
\begin{enumerate}
\item
$x \PRbR x$ for all variables $x$,
\item
$f(\seq{s}) \PRbR f(\seq{t})$ if $s_i \PRbR t_i$ for all
$1 \leqslant i \leqslant n$,
\item
$\ell\sigma \PRbR r\sigma$ with $\ell \R r~\CO{\varphi} \in \xR$ and
$\sigma \vDash \ell \R r~\CO{\varphi}$,
\item
$f(\seq{v}) \PR v$ with $f \in \xFTh \setminus \Val$,
$\seq{v} \in \Val$ and \\ $v = \inter{f(\seq{v})}$.
\end{enumerate}
\end{definition}

We write $\PRb[\geqslant p]$ to indicate that all positions
of contracted redexes in the parallel step are below $p$.
In the next definition we add constraints to parallel rewriting.

\begin{definition}
\label{def:parallel rewriting constraints}
Let $\xR$ be an LCTRS. The relation $\PRbR$ is defined on constrained terms
inductively as follows:
\begin{enumerate}
\item
$x~\CO{\varphi} \PRbR x~\CO{\varphi}$ for all variables $x$,
\item
$f(\seq{s})~\CO{\varphi} \PRbR f(\seq{t})~\CO{\varphi \land \psi}$ if
$s_i~\CO{\varphi} \PRbR t_i~\CO{\varphi \land \psi_i}$ for all
$1 \leqslant i \leqslant n$ and
$\psi = \psi_1 \land \cdots \land \psi_n$,
\item
$\ell\sigma~\CO{\varphi} \PRbR r\sigma~\CO{\varphi}$ with
$\rho\colon \ell \R r~\CO{\omega} \in \xR$, $\sigma(x) \in \Val \cup
\Var(\varphi)$ for all $x \in \LVar(\rho)$, $\varphi$ is satisfiable and
$\varphi \Rightarrow \omega\sigma$ is valid,
\item
$f(\seq{v})~\CO{\varphi} \PR v~\CO{\varphi \land v = f(\seq{v})}$ with
$\seq{v} \in \Val \cup \Var(\varphi)$,
$f \in \xFTh \setminus \Val$
and $v$ is a fresh variable.
\end{enumerate}
Here we assume that different applications to case 4 result in different
fresh variables. The constraint $\psi$ in case 2 collects the assignments
introduced in earlier applications of case 4. (If there are none,
$\psi = \m{true}$ is omitted.) The same holds for $\seq{\psi}$.
We write $\PRs$ for the relation $\sim \cdot \PRbR \cdot \sim$.
\end{definition}

In light of the earlier developments, the following definition is the
obvious adaptation of parallel closedness for LCTRSs.

\begin{definition}
A critical pair $s \approx t~\CO{\varphi}$ is \emph{parallel closed} if
\[
s \approx t~\CO{\varphi} \PRbs[\geqslant 1] u \approx v~\CO{\psi}
\]
for some trivial $u \approx v~\CO{\psi}$.
\end{definition}

Note that the right-hand side $t$ of the constrained equation
$s \approx t~\CO{\varphi}$ may change due to the equivalence
relation $\sim$, cf.\ the statement of \lemref{key}.

\begin{theorem}
\label{thm:pccp}
A left-linear LCTRS is confluent if its critical pairs are parallel
closed.
\end{theorem}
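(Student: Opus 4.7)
The plan is to adapt Huet's classical argument for left-linear TRSs to the constrained setting, aiming to show that the parallel rewrite relation $\PRbR$ on terms enjoys the diamond property: every local peak $t_1 \rPR s \PRbR t_2$ closes as $t_1 \PRbR u$ and $t_2 \PRbR u$ for some $u$. Since $\RbR \subseteq \PRbR \subseteq \RabR{*}$ this immediately yields confluence of $\RbR$. We first establish the \emph{semi-diamond}, in which the left side is a single step $t_1 \Lb[p\sh\rho_1\sh\sigma_1] s \PRbR t_2$; the full diamond then follows by an induction on the number of contracted redexes in $s \PRbR t_1$, repeatedly applying the semi-diamond.

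For the semi-diamond, let $Q = \SET{\seq{q}}$ be the positions contracted in $s \PRbR t_2$ using rules $\rho_{q_i}$ and substitutions $\sigma_{q_i}$, and perform a case analysis on each $q_i$ against $p$. Positions parallel to $p$ cause no interaction and the contractions commute. When $q_i$ is strictly below $p$ (or vice versa) with the offset pointing into $\VPOS$ of the relevant left-hand side, left-linearity lets us distribute the inner contraction through the variable instance, exactly as in the variable-position case of the proof of \thmref{sccp}. Calculation rules require no separate treatment as they are themselves left-linear.

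The delicate case is a genuine overlap, say $q_i \leqslant p$ with $p \backslash q_i \in \FPOS(\ell_{q_i})$ (root overlaps $q_i = p$ are treated analogously, cf.\ \thmref{sccp}). The overlap $\overlap[p \backslash q_i]{\rho_1}{\rho_{q_i}}$ yields a critical pair $s' \approx t'~\CO{\varphi'}$ that by hypothesis is parallel closed, i.e., $s' \approx t'~\CO{\varphi'} \PRbs[\geqslant 1] u' \approx v'~\CO{\psi'}$ with $u' \approx v'~\CO{\psi'}$ trivial. To instantiate this back, define $\gamma$ via $\sigma\gamma = \sigma_1 \cup \sigma_{q_i}$, where $\sigma$ is the mgu of the overlap. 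Using the decomposition $\LVar(\rho_1) \cup \LVar(\rho_{q_i}) = \Var(\varphi_1) \cup \Var(\varphi_{q_i}) \cup \Var(\psi)$ together with the trivial equations $\psi$ built into $\varphi'$, exactly as in the proof of \thmref{sccp}, we obtain $\gamma \vDash \varphi'$. We then require a parallel analogue of \lemref{key}: if $s' \approx t'~\CO{\varphi'} \PRbs[\geqslant 1] u' \approx v'~\CO{\psi'}$ and $\gamma \vDash \varphi'$, then $s'\gamma \PRbR u'\delta$ and $t'\gamma = v'\delta$ for some $\delta \vDash \psi'$. Iterating \lemref{key} over a decomposition of the parallel step into single steps, modulo $\sim$, delivers this lemma. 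Combined with triviality it gives $s'\gamma \PRbR u'\delta = v'\delta = t'\gamma$. Since $s'\gamma = t_1|_{q_i}$ and $t'\gamma = t_2|_{q_i}$, placing this step into the shared context around $q_i$ closes the peak at $q_i$ to match the contraction already performed on $t_2$.

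The main obstacle will be orchestrating these closures across all pairs $(p, q_i)$ simultaneously. Non-critical cases combine cleanly, but critical closures introduce fresh parallel steps whose redexes may interact with contractions at other positions of $Q$, and the fresh calculation variables introduced across multiple invocations of the parallel closedness hypothesis must be kept disjoint. Careful bookkeeping is required to assemble a single substitution respecting every $\varphi'$ in play. The trivial equations $\psi$ from \defref{ccp} remain indispensable: as \exaref{new} already demonstrates for joinability, omitting them would invalidate the substitution construction for critical overlaps.
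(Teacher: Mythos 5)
Your constrained-rewriting ingredients are the right ones and match the paper's: the lifting of constrained parallel steps to instantiated steps via repeated application of \lemref{key}, the substitution $\gamma$ with $\sigma\gamma = \sigma_1 \cup \sigma_2$, the verification $\gamma \vDash \varphi'$ through the decomposition of $\LVar(\rho_1) \cup \LVar(\rho_2)$ and the trivial equations $\psi$, and the treatment of variable overlaps via left-linearity. What is missing is the combinatorial core of Huet's argument, and it is precisely the part you flag as ``the main obstacle'' without resolving it. A per-redex case analysis of each $q_i$ against $p$ cannot simply be assembled into a closure of the peak. The case you call delicate, $q_i \leqslant p$, is actually the benign one: since the positions in $Q$ are pairwise parallel, at most one of them lies above $p$, so only one critical pair is involved. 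The genuinely hard case is the symmetric one, $p < q_i$ with $q_i \backslash p \in \FPOS(\ell_1)$ for \emph{several} $i$: then $t_2|_p$ has all of these inner redexes contracted simultaneously and is not an instance of any critical pair source, so the parallel-closedness hypothesis cannot be invoked ``at $q_i$'' independently for each $i$. The paper (following \cite{NM16}) resolves this by peeling off one overlapping redex, using the closing step supplied by the hypothesis to form a \emph{new} peak $t \rPR E\sigma[r'\tau] \PR u$, and recursing on it via a well-founded induction on the multiset $\overlaps(\cdot)$ of overlapping redexes ordered by $\Supertermmul$, nested with a structural induction on $s$ (\lemref{overlapprops} guarantees the measure decreases). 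No induction on the number of contracted redexes or on term size can replace this, because the closing parallel step may introduce arbitrarily many fresh redexes that themselves overlap the remaining contractions; this is the classical subtlety in Huet's proof and it is absent from your plan.

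A secondary point: deriving the diamond property of $\PRbR$ from your semi-diamond by induction on the number of redexes in $s \PRbR t_1$ does not work as stated; composing semi-diamonds only yields $t_1 \PR \cdot \rPRa[*] t_2$, i.e.\ strong confluence of $\PRbR$ rather than the diamond. That weaker property still implies confluence, and it is in fact exactly what the paper's appendix establishes (it proves \thmref{apccp}, of which \thmref{pccp} is a special case, and for almost parallel-closed systems the diamond property is out of reach anyway). So this part of the gap is repairable by weakening the goal, but the overlap-measure induction that makes the multiple-overlap case go through still has to be supplied.
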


To prove this result, we adapted the formalized proof presented
in \cite{NM16} to the constrained setting. The required changes are
very similar to the ones in the proof of \thmref{sccp}.

\begin{example}
\label{exa:parallel}
Consider the LCTRS $\xR$ with rules
\begin{align*}
\m{f}(x,y) &\R \m{g}(\m{a},y + y)~\CO{y \geq x \land y = \m{1}} &
\m{a} &\R \m{b} \\
\m{h}(\m{f}(x,y)) &\R \m{h}(\m{g}(\m{b},\m{2}))~\CO{x \geq y} &
\m{g}(x,y) &\R \m{g}(y,x)
\end{align*}
The single critical pair
$\m{h}(\m{g}(\m{a},y + y)) \approx \m{h}(\m{g}(\m{b},\m{2}))~
\CO{y \geq x \land y = \m{1} \land x \geq y}$ is parallel closed:
\begin{align*}
&\m{h}(\m{g}(\m{a},y + y)) \approx \m{h}(\m{g}(\m{b},\m{2}))~
\CO{y \geq x \land y = \m{1} \land x \geq y} \medskip \\
&\qquad
\PRb[\geqslant 1]
\m{h}(\m{g}(\m{b},z)) \approx \m{h}(\m{g}(\m{b},\m{2}))~
\CO{y \geq x \land y = \m{1} \land x \geq y \land z = y + y}
\end{align*}
and the obtained equation is trivial.
Hence $\xR$ is confluent by \thmref{pccp}. Note that the earlier
confluence criteria do no apply.
\end{example}

We also consider the extension of Huet's result by Toyama~\cite{T88},
which has a less restricted joinability condition on critical pairs
stemming from overlapping rules at the root position. Such critical pairs
are called \emph{overlays} whereas critical pairs originating from overlaps
$\overlap{\rho_1}{\rho_2}$ with $p > \epsilon$ are called
\emph{inner} critical pairs.

\begin{definition}
\label{def:apc}
An LCTRS $\xR$ is almost parallel-closed if every inner critical pair is
parallel closed and every overlay $s \approx t~\CO{\varphi}$ satisfies
\[
s \approx t~\CO{\varphi}
\,\PRbs[\geqslant 1] \cdot \Rs_{\geqslant 2}^*\,
u \approx v~\CO{\psi}
\]
for some trivial $u \approx v~\CO{\psi}$.
\end{definition}

\begin{theorem}
\label{thm:apccp}
Left-linear almost parallel-closed LCTRSs are confluent.
\end{theorem}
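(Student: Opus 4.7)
The plan is to adapt Toyama's proof of the unconstrained almost parallel-closed theorem, which establishes an asymmetric ``parallel diamond'' for $\PRbR$: whenever $t \rPRbR s \PRbR u$, there exists $v$ such that $t \PRbR v$ and $u \RabR{*} v$. Because $\RabR \subseteq \PRbR \subseteq \RabR{*}$, this asymmetric property for $\PRbR$ implies confluence of $\RbR$ by a standard diagrammatic tiling. The argument proceeds by induction, with an outer induction on the parallel steps (for instance, on the multiset of contracted positions) and an inner case analysis on the relative positions of the two outermost contracted redexes.

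For the inner case analysis, we view the two parallel steps as sets of independent rule and calculation redexes. Parallel positions commute trivially. When one position sits strictly below another, say at $pq$ below a redex contracted at $p$ with rule $\rho_2$ having left-hand side $\ell_2$, we split on whether $q \in \FPOS(\ell_2)$. The non-function case uses left-linearity exactly as in the corresponding case of the proof of Theorem \ref{thm:sccp}, redirecting the inner contraction through the unique occurrence of the relevant variable in $r_2$ to obtain a closing parallel step. The function case gives rise to a critical pair. If this is an \emph{inner} critical pair, parallel closedness supplies a $\PRbs[\geqslant 1]$-closure; if it is an \emph{overlay}, the almost parallel-closed condition of Definition \ref{def:apc} yields a $\PRbs[\geqslant 1] \cdot \Rs_{\geqslant 2}^*$ sequence, which provides exactly the asymmetric closure required.

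Two adaptations specific to the constrained setting are needed. First, the closures obtained from critical pairs are statements about constrained equations and must be instantiated back to rewrite steps on ground terms. As in the proof of Theorem \ref{thm:sccp}, we take $\gamma$ with $\sigma\gamma = \sigma_1 \cup \sigma_2$, verify that it respects the accumulated constraint $\varphi_2\sigma \land \varphi_1\sigma \land \psi\sigma$ (where the auxiliary equations $\psi$ from Definition \ref{def:ccp} are again essential to cover the extra variables), and then invoke Lemma \ref{lem:key} repeatedly to lift both $\PRs$- and $\Rs$-sequences on constrained terms to $\PRbR$- and $\RabR$-sequences on the ground instances. Second, calculation steps are handled uniformly by treating $\xR \cup \xRca$ as the rule set throughout the induction, so no separate argument for $\Rca$ is required.

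The main obstacle is the overlay case. The trailing $\Rs_{\geqslant 2}^*$ acts on the right-hand side of the constrained equation \emph{after} the parallel step has been applied to the left-hand side, and this asymmetry must be carried over to the actual peak so that the tail of $\RabR{*}$-steps lands on the side consistent with the induction hypothesis. In particular, one has to verify that the resulting skew closure can be composed with the closures produced by the other redexes of the outer parallel step without losing either the ``$\PRbR$ on one side, $\RabR{*}$ on the other'' shape or the decrease of the induction measure. This orientation bookkeeping, together with the careful management of logical variables and calculation steps, is the delicate combinatorial core that benefits most from following the formalized account of \cite{NM16} already used for Theorem \ref{thm:pccp}.
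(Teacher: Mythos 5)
Your proposal matches the paper's proof in all essentials: both adapt the Toyama-style argument formalized in \cite{NM16} via a well-founded induction on the amount of overlap between the two co-initial parallel steps (with an inner induction on term structure), resolve variable overlaps using left-linearity, close function overlaps with (almost) parallel closedness of the corresponding critical pair, and lift the constrained closing sequences back to the concrete peak through the substitution $\gamma$ with $\sigma\gamma = \sigma_1 \cup \sigma_2$, the auxiliary equations $\psi$, and repeated applications of Lemma~\ref{lem:key}. The only point worth tightening is the induction measure: the paper orders the multiset of overlapping redex \emph{terms} by $\Supertermmul$ (Definition~\ref{def:olapmeasure} and Lemma~\ref{lem:overlapprops}) rather than a multiset of contracted positions, which is what guarantees a strict decrease when the nested critical-overlap case produces the new, smaller peak.
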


Again, the formalized proof of the corresponding result for plain TRSs
in \cite{NM16} can be adapted to the constrained setting.

\begin{example}
\label{exa:almostpc}
Consider the following variation of the LCTRS $\xR$ in
\exaref{parallel}:
\begin{align*}
\m{f}(x,y) &\R \m{g}(\m{a},y + y)~\CO{y \geq x \land y = \m{1}} &
\m{a} &\R \m{b} \\
\m{f}(x,y) &\R \m{g}(\m{b},\m{2})~\CO{x \geq y} &
\m{g}(x,y) &\R \m{g}(y,x)
\end{align*}
The overlay
$\m{g}(\m{b},\m{2}) \approx \m{g}(\m{a},y + y)~
\CO{x \geq y \land y \geq x \land y = \m{1}}$ is not
parallel closed but one readily confirms that the condition in
\defref{apc} applies.
\end{example}

\section{Automation}
\label{sec:automation}

As it is very inconvenient and tedious to test by hand if an LCTRS
satisfies one of the confluence criteria presented in the preceding
sections, we provide an implementation. The natural choice would be to
extend the existing tool \ctrl~\cite{KN15} because it is currently the
only tool capable of analyzing confluence of LCTRSs.
However, \ctrl is not actively maintained and not very well documented,
so we decided to develop a new tool for the analysis of LCTRSs.
Our tool is called \crest (\textsf{c}onstrained \textsf{re}writing
\textsf{s}of\textsf{t}ware). It is written in Haskell, based on the
Haskell \texttt{term-rewriting}%
\footnote{\url{https://hackage.haskell.org/package/term-rewriting-0.4.0.2}}
library and allows the logics \texttt{QF\_LIA}, \texttt{QF\_NIA},
\texttt{QF\_LRA}.

The input format of {\crest} is described on its website.%
\footnote{\url{http://cl-informatik.uibk.ac.at/software/crest/}}
After parsing the input, \crest checks that the resulting LCTRS is
well-typed. Missing sort information is inferred.
Next it is checked concurrently whether one of the implemented confluence
criteria applies. \crest supports (weak) orthogonality, strong closedness
and (almost) parallel closedness. The tool outputs the computed critical
pairs and a ``proof'' describing how these are closed, based on the
first criterion that reports a YES result.
Below we describe some of the challenges that one
faces when automating the confluence criteria presented in the preceding
sections.

First of all, how can we determine
whether a constrained critical pair or more generally a
constrained equation $s \approx t~\CO{\varphi}$ is trivial? The following
result explains how this can be solved by an SMT solver.

\begin{definition}
\label{def:trivial}
Given a constrained equation $s \approx t~\CO{\varphi}$, the
formula $T(s,t,\varphi)$ is inductively defined as follows:
\begin{align*}
T(s,t,\varphi) = \begin{cases}
\m{true} &\text{if $s = t$} \\
s = t &\text{if $s, t \in \Val \cup \Var(\varphi)$} \\
\displaystyle \bigwedge_{i = 1}^n T(s_i,t_i,\varphi)
&\text{if $s = f(\seq{s})$ and $t = f(\seq{t})$} \\
\m{false} &\text{otherwise}
\end{cases}
\end{align*}
\end{definition}

\begin{lemma}
A constrained equation $s \approx t~\CO{\varphi}$ is trivial if and only
if the formula $\varphi \implies T(s,t,\varphi)$ is valid.
\end{lemma}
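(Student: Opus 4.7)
The plan is to prove both directions of the biconditional by induction on $|s| + |t|$, following the four cases of \defref{trivial}. In the equality case ($s = t$) triviality and the validity of $\varphi \Rightarrow \m{true}$ hold vacuously. In the atomic case, where $s, t \in \Val \cup \Var(\varphi)$, every $\sigma \vDash \varphi$ sends both sides to values, so syntactic equality $s\sigma = t\sigma$ coincides with $\inter{s\sigma} = \inter{t\sigma}$ and hence with validity of the logical equation $(s = t)\sigma$; therefore triviality is equivalent to the validity of $\varphi \Rightarrow (s = t)$, which here equals $\varphi \Rightarrow T(s, t, \varphi)$. In the same-root case, the identity $f(\seq{s})\sigma = f(\seq{t})\sigma \iff \bigwedge_i s_i\sigma = t_i\sigma$ reduces triviality of the whole equation to simultaneous triviality of each component pair $s_i \approx t_i~\CO{\varphi}$; applying the induction hypothesis componentwise converts this to validity of $\bigwedge_i T(s_i, t_i, \varphi) = T(s, t, \varphi)$.

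The remaining otherwise branch is where $T(s, t, \varphi) = \m{false}$, so the right-hand side of the biconditional asserts unsatisfiability of $\varphi$. The backward direction is immediate: without any respecting substitution, triviality holds vacuously. For the forward direction, assuming $\varphi$ is satisfiable by some $\gamma$, I would construct a witness $\sigma \vDash \varphi$ with $s\sigma \neq t\sigma$, contradicting triviality. The otherwise case arises in two structural patterns: (a) both $s$ and $t$ are non-variable terms with distinct root symbols, in which case $\gamma$ itself separates them since roots are preserved by substitution; or (b) at least one of $s, t$ is a variable $x \notin \Var(\varphi)$ with $s \neq t$. In pattern (b), altering $\gamma$ only on variables outside $\Var(\varphi)$ preserves the condition $\sigma \vDash \varphi$, and setting $\sigma(x) = x$ (together with the identity on other variables of the opposite side that lie outside $\Var(\varphi)$) leaves $x\sigma = x$ as a bare variable, whereas the opposite side either contains $x$ at a proper position (so well-foundedness forbids equality) or becomes a value, a different variable, or a term rooted by a function symbol — none of which equals the variable $x$.

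The main obstacle is this otherwise case, since one must check that the crafted $\sigma$ actually respects $\varphi$ and separates the two sides in every sub-shape, without relying on extra assumptions such as the presence of multiple values of a given sort. The uniform trick of keeping $\sigma$ as the identity on variables outside $\Var(\varphi)$ sidesteps such assumptions; once it is in place the three earlier cases combine with it to close the induction.
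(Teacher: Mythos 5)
Your proof is correct in substance but takes a genuinely different route from the paper. The paper argues both directions globally rather than by induction: for the ``if'' direction it observes that validity of $\varphi \implies T(s,t,\varphi)$ together with a respecting $\sigma$ forces $T(s,t,\varphi)$ to be a genuine conjunction (the $\m{false}$ case cannot have been reached), so $s$ and $t$ agree on positions and internal symbols and their leaves are equated by $\sigma$; for the ``only if'' direction it notes that triviality turns $T(s,t,\varphi)$ into a conjunction of equations between values and variables of $\varphi$ that $\sigma$ makes into identities. Your induction on $|s|+|t|$ along the four cases of \defref{trivial} makes explicit what the paper leaves implicit, in particular the construction of a separating substitution in the $\m{false}$ branch (where the real content of the ``only if'' direction lies, since one must check that triviality can only hold there because $\varphi$ is unsatisfiable). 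Your trick of keeping $\sigma$ the identity on variables outside $\Var(\varphi)$ is exactly the right device and avoids any assumption on the size of the carrier sets.

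One small omission in your enumeration of the ``otherwise'' branch: patterns (a) and (b) do not cover the situation where one side is a variable $x \in \Var(\varphi)$ and the other is a non-variable term outside $\Val$, as in $x \approx \m{f}(y)~\CO{x \geq \m{0}}$. This configuration does fall into the $\m{false}$ case of $T$ (the sides are not both in $\Val \cup \Var(\varphi)$ and do not share a root), yet it is neither ``both non-variable'' nor ``a variable outside $\Var(\varphi)$''. It is separated by \emph{any} $\gamma \vDash \varphi$, since $x\gamma$ is a value (a constant) while the other side keeps its non-value root symbol under substitution; this is the same observation that underlies your atomic case, so the fix is immediate, but as stated the case split is incomplete.
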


\begin{proof}
First suppose $\varphi \implies T(s,t,\varphi)$ is valid. Let $\sigma$ be
a substitution with
$\sigma \vDash \varphi$. Since $\sigma(x) \in \Val$ for all
$x \in \Var(\varphi)$, we can apply $\sigma$ to the formula
$\varphi \implies T(s,t,\varphi)$. We obtain
$\inter{\varphi\sigma} = \top$ from $\sigma \vDash \varphi$. Hence also
$\inter{T(s,t,\varphi)\sigma} = \top$. Since
$T(s,t,\varphi)$ is a conjunction, the final case in the definition of
$T(s,t,\varphi)$ is not used. Hence $\Pos(s) = \Pos(t)$,
$s(p) = t(p)$ for all internal positions $p$ in $s$ and $t$, and
$s|_p\sigma = t|_p\sigma$ for all leaf positions $p$ in $s$ and $t$.
Consequently, $s\sigma = t\sigma$. This concludes the
triviality proof of $s \approx t~\CO{\varphi}$.

For the only if direction, suppose $s \approx t~\CO{\varphi}$ is
trivial. Note that the variables appearing in the formula
$\varphi \implies T(s,t,\varphi)$ are those of $\varphi$.
Let $\sigma$ be an arbitrary assignment such that
$\inter{\varphi\sigma} = \top$. We need to show
$\inter{T(s,t,\varphi)\sigma} = \top$. We can view $\sigma$ as
a substitution with $\sigma(x) \in \Val$ for all
$x \in \Var(\varphi)$. We have
$\sigma \vDash \varphi$ and thus $s\sigma = t\sigma$ by
the triviality of $s \approx t~\CO{\varphi}$. Hence
$T(s,t,\varphi)$ is a conjunction of equations between
values and variables in $\varphi$, which are turned into
identities by $\sigma$. Hence $\inter{T(s,t,\varphi)\sigma} = \top$ as
desired.
\qed
\end{proof}

The second challenge is how to implement rewriting on constrained
equations in particular, how to deal with the equivalence relation
$\sim$ defined in \defref{constrained rewriting}.

\begin{example}
\label{exa:equivalence}
The LCTRS $\xR$
\begin{align*}
\m{f}(x) &\R z ~ \CO{z = \m{3}} &
\m{g}(\m{f}(x)) &\R \m{a} &
\m{g}(\m{3}) &\R \m{a}
\end{align*}
over the integers admits two critical pairs:
\begin{align*}
&z \approx z'~\CO{z = \m{3} \land z' = \m{3}} &
&\m{g}(z) \approx \m{a}~\CO{z = \m{3}}
\end{align*}
The first one is trivial, but to join the second one, an initial
equivalence step is required:
\[
\m{g}(z) \approx \m{a}~\CO{z = \m{3}} \sim
\m{g}(\m{3}) \approx \m{a}~\CO{z = \m{3}} \R
\m{a} \approx \m{a}~\CO{z = \m{3}}
\]
\end{example}

The transformation introduced below avoids having to look for an
initial equivalence step before a rule becomes applicable. 

\begin{definition}
\label{def:transformation}
Let $\xR$ be an LCTRS. Given a term $t \in \xT(\xF,\xV)$,
we replace values in $t$ by fresh variables and return
the modified term together with the constraint that collects the
bindings:
\begin{align*}
\tf(t) &= \begin{cases}
(t,\m{true}) &\text{if $t \in \xV$} \\
(z, z = t) &\text{if $t \in \Val$ and $z$ is a fresh variable} \\
(f(\seq{s}),\varphi_1 \land \cdots \land \varphi_n)
&\text{if $t = f(\seq{t})$ and $\tf(t_i) = (s_i,\varphi_i)$}
\end{cases}
\end{align*}
Applying the transformation $\tf$ to the left-hand sides of the
rules in $\xR$ produces
\begin{align*}
\tf(\xR) &= \SET{\ell' \R r~\CO{\varphi \land \psi} \mid
\text{$\ell \R r~\CO{\varphi} \in \xR$ and $\tf(\ell) = (\ell',\psi)$}}
\end{align*}
\end{definition}

\begin{example}
Applying the transformation $\tf$ to the LCTRS $\xR$ of
\exaref{equivalence} produces the rules
\begin{align*}
\m{f}(x) &\R z ~ \CO{z = \m{3}} &
\m{g}(\m{f}(x)) &\R \m{a} &
\m{g}(z) &\R \m{a} ~ \CO{z = \m{3}}
\end{align*}
The critical pair 
$\m{g}(z) \approx \m{a}~\CO{z = \m{3}}$ can now be joined by an
application of
the modified third rule. Note that the modified rule does not overlap
with the second rule because $z$ may not be instantiated with $\m{f}(x)$.
Hence the modified LCTRS $\tf(\xR)$ is strongly closed and, because it is
linear, also confluent.
\end{example}

In the following we show the correctness of the transformation.
In particular we prove that the initial rewrite relation is preserved.

\begin{lemma}
\label{lem:R vs tr(R)}
The relations
$\Rb[\xR]$ and $\Rb[\tf(\xR)]$ coincide on unconstrained terms.
\end{lemma}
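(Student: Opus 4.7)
The plan is to prove the two set inclusions separately, by a rule-by-rule correspondence. First I would observe that calculation steps are unaffected: $\tf$ is applied only to rules in $\xR$ (not to $\xRca$), and left-hand sides of calculation rules are of the form $f(\seq{x})$ with variables $\seq{x}$, containing no values, so $\tf$ would leave them unchanged anyway. Hence $\Rca$ is identical for both systems, and it suffices to compare rule steps.

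Fix a rule $\rho\colon \ell \R r~\CO{\varphi}$ in $\xR$ with $\tf(\ell) = (\ell',\psi)$. From the definition of $\tf$, one sees that $\ell'$ arises from $\ell$ by replacing each value occurrence $v_i$ by a fresh variable $z_i$, while $\psi = \bigwedge_{i} z_i = v_i$. In particular $\Var(\ell') = \Var(\ell) \cup \SET{z_1,\dots,z_k}$, the $z_i$ are disjoint from $\Var(r)$ and $\Var(\varphi)$, and they lie in $\LVar(\rho')$ where $\rho'\colon \ell' \R r~\CO{\varphi \land \psi}$ is the transformed rule.

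For the forward direction, given a rule step using $\rho$ with substitution $\sigma \vDash \rho$, I would define $\sigma' = \sigma \cup \SET{z_i \mapsto v_i}$. Then $\ell'\sigma' = \ell\sigma$ because the substitution of $z_i$ by $v_i$ exactly reverses the transformation, and $r\sigma' = r\sigma$ because the $z_i$ do not occur in $r$. The constraint $\psi\sigma'$ reduces to $\bigwedge (v_i = v_i)$, which is valid, while $\varphi\sigma' = \varphi\sigma$ is valid by assumption; and $\sigma'(z_i) = v_i \in \Val$ handles the logical-variable condition. Hence $\sigma' \vDash \rho'$, so the same rewrite step is available in $\tf(\xR)$.

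For the converse, given a rule step using $\rho'$ with $\sigma' \vDash \rho'$, the conditions $\sigma'(z_i) \in \Val$ (since $z_i \in \LVar(\rho')$) together with validity of $\psi\sigma'$ force $\inter{\sigma'(z_i)} = \inter{v_i}$; by the bijection between $\Val_\iota$ and $\xI(\iota)$, this yields $\sigma'(z_i) = v_i$ syntactically. Restricting $\sigma'$ to $\Var(\ell) \cup \Var(r) \cup \Var(\varphi)$ gives $\sigma$ with $\ell\sigma = \ell'\sigma'$ and $r\sigma = r\sigma'$, and $\sigma \vDash \rho$ because the required constraint $\varphi\sigma$ and the value condition on $\LVar(\rho) \subseteq \LVar(\rho')$ are inherited from $\sigma'$. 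The main technical obstacle I anticipate is simply bookkeeping the variable sets — verifying that the fresh $z_i$ sit in the right components of $\LVar$, $\EVar$, and $\Dom$, and that the bijectivity of $\xJ$ on values is used precisely to pin down $\sigma'(z_i) = v_i$ rather than a different value with the same interpretation.
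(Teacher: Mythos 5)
Your proof is correct and takes essentially the same route as the paper's: calculation steps are dismissed as unaffected, the forward direction extends $\sigma$ by mapping the fresh variables $z_i$ to the replaced values, and the converse restricts the given substitution after arguing that the constraint forces $\sigma'(z_i) = v_i$. If anything, your converse direction is more careful than the paper's, which merely asserts that the substitution ``substitutes the required values'' without spelling out that the bijectivity of $\xJ$ on $\Val$ is what upgrades semantic equality to the needed syntactic identity.
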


\begin{proof}
Consider $s, t \in \xT(\xF,\xV)$.
Since the transformation $\tf$ does not affect calculation steps,
it suffices to consider rule steps.
First assume $s = C[\ell\sigma] \Rru C[r\sigma] = t$ by applying the
rule $\ell \R r~\CO{\varphi} \in \xR$ and
let $\ell' \R r~\CO{\varphi'} \in \tf(\xR)$ be its transformation. So
$\tf(\ell) = (\ell',\psi)$ and $\varphi' = \varphi \land \psi$.
Define the substitution
\[
\sigma' = \SET{\ell'|_p \mapsto \ell|_p \mid
\text{$(\ell',\psi) = \tf(\ell)$, $p \in \Pos(\ell)$ and
$\ell|_p \in \Val$}}
\]
and let $\tau = \sigma \cup \sigma'$. Since
$\Dom(\sigma) \cap \Dom(\sigma') = \varnothing$ by construction, $\tau$
is well-defined. From
$\sigma \vDash \ell \R r~\CO{\varphi}$ and $\sigma' \vDash \psi$
we immediately obtain
$\tau \vDash \ell' \R r~\CO{\varphi'}$, which yields
$s = C[\ell'\tau] \Rru C[r\tau] = t$ in $\tf(\xR)$.

For the other direction consider $s = C[\ell'\sigma] \Rru C[r'\sigma] = t$
by applying the rule $\ell' \R r'~\CO{\varphi'} \in \tf(\xR)$.
The difference between $\ell'$ and its originating left-hand side
$\ell$ in $\xR$ is that value positions in $\ell$ are occupied by fresh
variables in $\ell'$.
Because $\sigma'$ respects $\varphi' = \varphi \land \psi$,
$\sigma'$ substitutes the required values at these
positions in $\ell$.
As $\sigma \vDash \ell' \R r'~\CO{\varphi'}$,
there exists a rule $\ell \R r~\CO{\varphi}$ which is respected by
$\sigma$ and thus
$s = C[\ell\sigma] \Rru C[r\sigma] = t$ in $\xR$.
\qed
\end{proof}

As the transformation is used in the implementation and rewriting on
constrained terms plays a key role, the following result is needed.
The proof is similar
to the first half of the proof of \lemref{R vs tr(R)} and omitted.

\begin{lemma}
The inclusion ${\RbR} \subseteq {\Rb[\tf(\xR)]}$ holds on constrained
terms.
\end{lemma}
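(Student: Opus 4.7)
The plan is to mirror the forward direction (from $\xR$ to $\tf(\xR)$) of the proof of \lemref{R vs tr(R)}, lifted to constrained terms. A step $s~\CO{\varphi} \RbR s'~\CO{\varphi'}$ is either a calculation step or a rule step. Since $\tf$ leaves the calculation rules in $\xRca$ untouched, any $\Rca$-step of $\xR$ is verbatim an $\Rca$-step of $\tf(\xR)$ with the same updated constraint. Only rule steps require work.

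So suppose $s~\CO{\varphi} \Rru s[r\sigma]_p~\CO{\varphi}$ via $\rho\colon \ell \R r~\CO{\omega} \in \xR$, with $s|_p = \ell\sigma$, $\sigma(x) \in \Val \cup \Var(\varphi)$ for all $x \in \LVar(\rho)$, $\varphi$ satisfiable, and $\varphi \Rightarrow \omega\sigma$ valid. Let $\rho'\colon \ell' \R r~\CO{\omega \land \psi} \in \tf(\xR)$ be the corresponding transformed rule, where $\tf(\ell) = (\ell',\psi)$. Define
\[
\sigma' = \sigma \cup \SET{\ell'|_q \mapsto \ell|_q \mid q \in \Pos(\ell) \text{ and } \ell|_q \in \Val}
\]
exactly as in \lemref{R vs tr(R)}. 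The added bindings concern the fresh variables introduced by $\tf$, so $\sigma'$ is well-defined. A routine induction on $\ell$ yields $\ell'\sigma' = \ell\sigma$, and $r\sigma' = r\sigma$ since the fresh variables do not occur in $r$. Hence $s|_p = \ell'\sigma'$ and the contractum satisfies $s[r\sigma']_p = s[r\sigma]_p$, keeping the constraint $\varphi$ unchanged.

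The main point is to verify the three side conditions of \defref{constrained rewriting} for this step via $\rho'$. Satisfiability of $\varphi$ is immediate. For the logical-variable condition, $\LVar(\rho') = \Var(\omega) \cup \Var(\psi) \cup (\Var(r) \setminus \Var(\ell))$: on $\Var(\omega) \cup (\Var(r) \setminus \Var(\ell))$ the substitution $\sigma'$ coincides with $\sigma$ and thus lands in $\Val \cup \Var(\varphi)$ by assumption, while each fresh variable in $\Var(\psi)$ is mapped by $\sigma'$ to a value by construction. Finally, $\omega\sigma' = \omega\sigma$ since the fresh variables lie outside $\Var(\omega)$, and $\psi\sigma'$ is a conjunction of identities $v = v$, so the validity of $\varphi \Rightarrow (\omega \land \psi)\sigma'$ reduces to the assumed validity of $\varphi \Rightarrow \omega\sigma$. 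The main obstacle, as hinted by the authors' remark that the proof is similar to the first half of \lemref{R vs tr(R)}, is purely bookkeeping: ensuring via the freshness convention of \defref{transformation} that the new variables introduced by $\tf$ sit outside $\Var(\omega)$, $\Var(r)$ and $\Dom(\sigma)$, so that $\sigma'$ is well-defined and $\omega\sigma'$ and $r\sigma'$ collapse to $\omega\sigma$ and $r\sigma$.
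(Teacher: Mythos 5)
Your proposal is correct and follows essentially the same route as the paper's proof: extend $\sigma$ with the bindings of the fresh variables introduced by $\tf$ to the values they replaced, observe that the transformed left-hand side then matches the same redex, and reduce the validity of $\varphi \Rightarrow (\omega \land \psi)\sigma'$ to that of $\varphi \Rightarrow \omega\sigma$ because $\psi\sigma'$ collapses to trivial identities. Your write-up is in fact somewhat more careful than the paper's, as it explicitly checks the $\LVar$ condition for the transformed rule and handles calculation steps separately.
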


\begin{proof}
Let $C[\ell\sigma]~\CO{\varphi} \Rru C[r\sigma]~\CO{\varphi}$ by
applying the rule $\ell \R r \CO{\psi} \in \xR$
and let $\ell' \R r~\CO{\psi'} \in \tf(\xR)$
such that $\tf(\ell) = (\ell',\psi)$ and $\psi' = \psi \land \alpha$.
Define the substitution
$\sigma' = \SET{\ell'|_p \mapsto \ell|_p \mid
\text{$(\ell',\alpha) \in \tf(\ell)$, $p \in \Pos(\ell)$ and
$\ell|_p \in \Val$}}$
and let $\tau = \sigma \cup \sigma'$, which is well-defined as
$\Dom(\sigma) \cap \Dom(\sigma') = \varnothing$.
The constraint $\varphi$ is satisfiable by assumption and
$\sigma$ satisfies all the conditions for a constrained rewrite step.
The substitution $\sigma'$ maps variables in $\ell'$ to values such
that $\ell'\sigma' = \ell$.
It remains to show that $\varphi \Rightarrow \psi'\tau$ is valid, which
follows from 
$\psi' = \psi \land \alpha$, the validity of
$\varphi \Rightarrow \psi\tau$, and the fact that
$\psi\sigma$ is equivalent to $\psi'\tau$.
Hence $C[\ell\tau]~\CO{\varphi} \Rru C[r\tau]~\CO{\varphi}$
in $\tf(\xR)$.
\qed
\end{proof}

\section{Experimental Results}
\label{sec:experiments}

In order to evaluate our tool we performed some experiments.
As there is no official database of interesting confluence problems
for LCTRSs, we collected several LCTRSs from the literature
and the repository of \ctrl.
The problem files in the latter that contain an equivalence problem
of two functions for
rewriting induction were split into two separate files.
The experiments were performed on an AMD Ryzen 7 PRO 4750U CPU with a
base clock speed of 1.7 GHz, 8 cores and 32 GB of RAM.
The full set of benchmarks consists of 127 problems of which \crest
can prove 90 confluent, 11 result in MAYBE and 26 in a timeout.
With a timeout of 5 seconds \crest{} needs 141.09 seconds to analyze the
set of benchmarks. 
We have tested the implementation with 3 well-known SMT solvers:
\textsf{Z3}, \textsf{Yices} and \textsf{CVC5}.
Among those \textsf{Z3} gives the best performance regarding time
and the handling of non-linear arithmetic.
Hence we use \textsf{Z3} as the default SMT solver in our implementation.
In \tabref{results} we list some interesting systems from this paper
and the relevant literature.
Full details are available from the website of \crest.
We choose 5 as the maximum number of steps in the $\Ra[*]$ parts of the
strongly closed and almost parallel closed criteria.

\begin{table}[tb]
\renewcommand{\arraystretch}{1.25}
\centering
\caption{Specific experimental results.}
\label{tab:results}
\begin{tabular}{@{}lccr@{}}
\toprule
& result & method & time (in ms) \\
\midrule 
\cite[Example~23]{WM18} & Timeout & --- & 10017.70 \\
\cite[Example~23]{WM18} corrected & YES & strongly closed & 103.71 \\
\exaref{example6} & YES & orthogonal & 34.35 \\
\cite[Example~3]{KN13} & YES & weakly orthogonal & 50.87 \\
\exaref{example1} & YES & strongly closed & 115.33 \\
\cite[Example~1]{NM16} & YES & strongly closed & 3806.84 \\
\exaref{parallel} & YES & parallel closed & 38.42\\
\exaref{almostpc} & YES & almost parallel closed & 130.36\\
\bottomrule
\end{tabular}
\end{table}

From \tabref{overlaps} the relative power of each implemented
confluence criterion on our benchmark can be inferred,
i.e., it depicts how many of the 127 problems both methods
can prove confluent.
This illustrates that the relative applicability in theory
(e.g., weakly orthogonal LCTRSs are parallel closed),
is preserved in our implementation.
We conclude this section with an interesting observation discovered by
\crest when testing \cite[Example~23]{WM18}.

\begin{table}[tb]
\renewcommand{\arraystretch}{1.25}
\centering
\caption{Comparison between confluence criteria implemented in
\crest.}
\label{tab:overlaps}
\begin{tabular}{@{}l@{\quad}r@{\quad}r@{\quad}r@{\quad}r@{\quad}r@{}}
\toprule & O & W & S & P & A \\ \midrule
orthogonality (O) & 74 & 74 & 11 & 74 & 74 \\
weak orthogonality (W) & & 78 & 13 & 78 & 78 \\
strongly closed (S) & & & 20 & 16 & 20 \\
parallel closed (P) & & & & 83 & 83 \\
almost parallel closed (A) & & & & & 89 \\
\bottomrule
\end{tabular}
\end{table}

We also tested the applicability of \corref{jcp}, using the tool
\ctrl as a black box for proving termination. Of the 127 problems,
\ctrl claims 102 to be terminating and 67 of those can be shown
locally confluent by \crest{}, where we limit the number of steps in
the joining sequence to 100.
It is interesting to note that all of these problems are
orthogonal, and so proving termination and finding a
joining sequence is not necessary to conclude confluence, on the
current set of problems. Of the remaining 35 problems, \crest{} can
show confluence of 5 of these by almost parallel closedness.

\begin{example}
The LCTRS $\xR$ 
is obtained by
completing a system consisting of four constrained equations:
\begin{alignat*}{4}
&1. \quad & \m{f}(x,y) &\R \m{f}(z,y) + \m{1} ~
\ML[20]{$\CO{x \geq \m{1} \land z = x - \m{1}}$} \\
&2. & \m{f}(x,\m{0}) &\R \m{g}(\m{1},x) ~ \CO{x \leq \m{1}} \\
&3. & \m{g}(\m{0},y) &\R y ~ \CO{x \leq \m{0}} &
&5. \quad & \m{h}(x) &\R \m{g}(\m{1},x) + \m{1} ~ \CO{x \leq \m{1}} \\
&4. & \m{g}(\m{1},\m{1}) &\R \m{g}(\m{1},\m{0}) + \m{1} &
&6. & \m{h}(x) &\R \m{f}(x - \m{1},\m{0}) + \m{2} ~ \CO{x \geq \m{1}}
\end{alignat*}
Calling \crest on $\xR$ results in a timeout.
As a matter of
fact, the LCTRS is not confluent because the critical pair
\[
\m{g}(\m{1},x) + \m{1} \approx \m{f}(x - \m{1},\m{0}) + \m{2}
~\CO{x \leq \m{1} \land x \geq \m{1}}
\]
between rules 5 and 6 is not joinable. Inspecting the steps
in~\cite[Example~23]{WM18} reveals some incorrect applications of the
inference rules of constrained completion, which causes rule 6 to be
wrong. Replacing it with the correct rule
\begin{alignat*}{2}
&6'. \quad & \m{h}(x) \R (\m{f}(z,\m{0}) + \m{1}) + \m{1} ~
[x > \m{1} \land z = x - \m{1}]
\end{alignat*}
causes \crest to report confluence by strong closedness.
\end{example}

\section{Concluding Remarks}
\label{sec:conclusion}

In this paper we presented new confluence criteria for LCTRSs as well
as a new tool in which these criteria have been implemented. We 
clarified the subtleties that arise when analyzing joinability of
critical pairs in LCTRSs and reported experimental results.

For plain rewrite systems many more confluence criteria are known and
implemented in powerful tools that compete in the yearly Confluence
Competition (CoCo).\footnote{\url{http://project-coco.uibk.ac.at/}}
In the near future we will investigate which of these can be lifted to
LCTRSs. We will also advance the creation of a competition category on
confluence of LCTRSs in CoCo.

Our tool \crest has currently no support for termination. Implementing
termination techniques in \crest is of clear interest. The starting
point here are the methods reported in \cite{K13,K16,WM18}.
Many LCTRSs coming from applications are actually
non-confluent.\footnote{Naoki Nishida, personal communication
(February 2023).} So developing more powerful techniques for LCTRSs is
on our agenda as well.

\subsubsection*{Acknowledgments.}

We thank Fabian Mitterwallner for valuable discussions on the presented
topics and our Haskell implementation.
The detailed comments by the reviewers improved the presentation.
Cynthia Kop and Deivid Vale kindly provided us with instructions and a
working implementation of \ctrl.

\appendix

\section{Proof of Theorems~\ref{thm:pccp} and~\ref{thm:apccp}.}

Since \thmref{pccp} is a special case of \thmref{apccp}, we prove the
latter.
Our proof is based on the proof that has been formalized in
Isabelle/HOL for plain TRSs reported in~\cite{NM16}. The following
definitions and results from \cite{NM16} are needed in the proof.

Recall that a context is a term with exactly one hole ($\hole$).
A multihole context is a term with an arbitrary number of holes.
We write $C[\seq{t}]$ for filling the $n$ holes present in the
multihole context $C$ from left to right with the terms $\seq{t}$.
In order to save space we abbreviate $\seq{a}$ to $\overline{a}$.

\begin{definition}[{\rm \cite[Definition~4]{NM16}}]
\label{def:prmultihole}
We write $s \PRb[C\sh\overline{a}] t$ if $s = C[\seq{a}]$ and
$t = C[\seq{b}]$ for some $\seq{b}$ with
$a_i \Rb[\epsilon] b_i$ for all $1 \leqslant i \leqslant n$.
\end{definition}

It is easy to show that
$s \PR t$ if and only if $s \PRb[C\sh\overline{s}] t$ for some
multihole context $C$ and terms $\overline{s}$.
In order to measure the overlap between two parallel steps starting at the
same term, the overlapping redexes are collected in a multiset.

\begin{definition}[{\rm \cite[Definition~5]{NM16}}]
\label{def:olapmeasure}
The overlap between two co-initial parallel rewrite steps is defined by
the following equations
\begin{align*}
\overlaps\big(\rPRb[\hole\sh{a}] s \PRb[\hole\sh b]\big) &~=~ \SET{s} \\
\overlaps\big(\rPRb[C\sh{\seq[c]{a}}] s \PRb[\hole\sh b]\big)
&~=~ \SET{\seq[c]{a}} \\
\overlaps\big(\rPRb[\hole\sh a] s \PRb[C\sh{\seq[c]{b}}]\big)
&~=~ \SET{\seq[c]{b}} \\
\overlaps\big(\rPRb[f(\seq{C})\sh\overline{a}] s
\PRb[f(\seq{D})\sh\overline{b}]\big) &~=~ \bigcup_{i=1}^{n}
\overlaps\big(\rPRb[C_i\sh\overline{a}_i] s_i
\PRb[D_i\sh\overline{b}_i]\big)
\end{align*}
Here $\overline{a}_1,\dots,\overline{a}_n = \overline{a}$ and
$\overline{b}_1,\dots,\overline{b}_n = \overline{b}$ are partitions of
$\overline{a}$ and $\overline{b}$ such that the length of
$\overline{a}_i$ and $\overline{b}_i$ matches the number of holes in
$C_i$ and $D_i$ for $1 \leqslant i \leqslant n$.
\end{definition}

The properties in the next lemma are crucial for the proof of Theorem~4.

\begin{lemma}[{\rm \cite[Lemma~6]{NM16}}]
\label{lem:overlapprops}
For a peak $\rPRb[C\sh\overline{a}] s \PRb[D\sh\overline{b}]$ the following
properties of $\overlaps(\,\cdot\,)$ hold.
\begin{itemize}
\item
If $s = f(\seq{s})$ with $C = f(\seq{C})$ and $D = f(\seq{D})$ then
\(
\overlaps\big(\rPRb[C_i\sh\overline{a}_i] s_i
\PRb[D_i\sh\overline{b}_i]\big) \,\subseteq\,
\overlaps\big(\rPRb[C\sh\overline{a}] s \PRb[D\sh\overline{b}]\big)
\)
for all $1 \leqslant i \leqslant n$.
\smallskip
\item
The overlap is bounded by $\overline{a}$:~
$\SET{\seq[c]{a}} \,\Supertermmuleq\,
\overlaps\big(\rPRb[C\sh\overline{a}] s \PRb[D\sh\overline{b}]\big)$.
\smallskip
\item
The overlap is symmetric:~
$\overlaps\big(\rPRb[D\sh\overline{b}] s \PRb[C\sh\overline{a}]\big) \,=\,
\overlaps\big(\rPRb[C\sh\overline{a}] s \PRb[D\sh\overline{b}]\big)$.
\end{itemize}
\end{lemma}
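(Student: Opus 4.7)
The plan is to prove all three properties simultaneously by structural induction on the term $s$ (equivalently, on the pair of multihole contexts $C$ and $D$), with a case analysis mirroring the four defining clauses of $\overlaps(\cdot)$ in \defref{olapmeasure}.

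The \emph{first} property is immediate from the definition itself: in the case $s = f(\seq{s})$ with $C = f(\seq{C})$ and $D = f(\seq{D})$, the fourth clause gives $\overlaps\bigl(\rPRb[C\sh\overline{a}] s \PRb[D\sh\overline{b}]\bigr)$ as the union of the component overlaps, so each component is trivially contained. No induction is required here.

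For the \emph{second} property, I would dispatch three base cases. If $C = D = \hole$, then $\overline{a}$ is a single redex $a$ with $a = s$, and the overlap is $\SET{s}$, giving $\SET{a} \Supertermmuleq \SET{s}$ trivially. If $C = \hole$ and $D$ is non-trivial, then $\overline{a}$ consists of a single $a$ with $a = s = D[\overline{b}]$ and the overlap is $\SET{\overline{b}}$; since each $b_j$ is a (strict) subterm of $a = s$, the bound $\SET{a} \Supertermmuleq \SET{\overline{b}}$ follows from the definition of the multiset extension. The symmetric base case ($D = \hole$, $C$ non-trivial) is immediate because the overlap is literally $\SET{\overline{a}}$. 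For the inductive step, the induction hypothesis gives $\SET{\overline{a}_i} \Supertermmuleq \overlaps\bigl(\rPRb[C_i\sh\overline{a}_i] s_i \PRb[D_i\sh\overline{b}_i]\bigr)$ for every $i$; since $\overline{a}_1,\dots,\overline{a}_n$ is a partition of $\overline{a}$, taking multiset unions on both sides preserves the inequality (a routine lemma on multiset orderings), yielding the required bound on the union of component overlaps.

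For the \emph{third} property, structural induction suffices once one observes that the first base clause is literally symmetric in $(C,\overline{a})$ and $(D,\overline{b})$, the second and third clauses swap into one another under the exchange, and the fourth clause is a union of component overlaps that are symmetric by the induction hypothesis. Thus at every clause the defining right-hand side is invariant under swapping the two peaks.

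The main obstacle is bookkeeping rather than mathematical depth: one must verify that the partitions $\overline{a}_1,\dots,\overline{a}_n$ and $\overline{b}_1,\dots,\overline{b}_n$ appearing in the fourth clause of \defref{olapmeasure} are determined uniquely enough for the induction to go through (the length of $\overline{a}_i$ must match the number of holes in $C_i$, and similarly for $\overline{b}_i$ and $D_i$), and that the subterm multiset extension $\Supertermmuleq$ combines correctly under disjoint union so that componentwise bounds assemble into a bound across the whole step.
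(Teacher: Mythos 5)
Your proof is correct: all three properties do follow by the structural induction you describe over the defining clauses of \defref{olapmeasure}, and you correctly identify the only delicate points --- that each $b_j$ is a \emph{strict} subterm of $s$ when $C = \hole$ and $D \neq \hole$ (so the multiset extension genuinely decreases there, with $\Supertermmuleq$ covering the reflexive cases), that the partitions in the fourth clause are uniquely determined by the hole counts of the $C_i$ and $D_i$, and that $\Supertermmuleq$ is compatible with multiset sum. The paper gives no proof of its own, citing the Isabelle/HOL formalization in \cite{NM16}, where the argument is essentially this same induction, so your approach matches the intended one.
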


The following technical lemma is needed in the third case of the proof of
Theorem~4.

\begin{lemma}[{\rm \cite[Lemma~7]{NM16}}]
\label{lem:linearpr}
Let $s$ be a linear term. If $s\sigma \PRb[C\sh\overline{s}] t$ then
either $t = s\tau$ for some substitution
$\tau$ such that $x\sigma \PR x\tau$ for all $x \in \Var(s)$ or there
exist a context $D$, a non-variable subterm $s'$ of $s$,
a rule $\rho\colon \CRR \in \xR \cup \xRca$, a substitution
$\tau \vDash \rho$, and a multihole context $C'$ such that
$s = D[s']$, $s'\sigma = \ell\tau$,
$D\sigma[r\tau] = C'[s_1,\dots,s_{i-1},s_{i+1},\dots,s_n]$ and
$t = C'[t_1,\dots,t_{i-1},t_{i+1},\dots,t_n]$ for some
$1 \leqslant i \leqslant n$.
\end{lemma}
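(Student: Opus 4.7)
The plan is to prove the statement by induction on the structure of the linear term $s$, following the argument of~\cite{NM16} for plain TRSs while accommodating the calculation rules of $\xRca$ uniformly through the hypothesis $\tau \vDash \rho$. If $s$ is a variable, I define $\tau$ to coincide with $\sigma$ except that $s\tau := t$. Then $s\tau = t$, and the parallel step $s\sigma \PR s\tau$ is exactly the given hypothesis; every other variable is untouched, so the first alternative holds immediately.

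Otherwise $s$ is not a variable, and I distinguish two cases based on the shape of the multihole context $C$. If $C = \hole$, then the parallel step reduces to a single root step $s\sigma \Rb[\epsilon] t$. Unfolding this step supplies a rule $\rho\colon \CRR \in \xR \cup \xRca$ and a substitution $\tau \vDash \rho$ with $s\sigma = \ell\tau$ and $t = r\tau$. Choosing $D = \hole$, $s' = s$ (which is a non-variable subterm of itself), $i = n = 1$, and $C' = t$ as a zero-hole multihole context verifies the second alternative, since $D\sigma[r\tau] = t = C'[]$ and $t = C'[]$ trivially.

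In the remaining case $s = f(u_1,\dots,u_k)$ and $C = f(C_1,\dots,C_k)$. The lists $\overline{s}$ and $\overline{t}$ partition into sub-lists $\overline{s}^j$ and $\overline{t}^j$ satisfying $u_j\sigma \PRb[C_j\sh\overline{s}^j] C_j[\overline{t}^j]$ for each $1 \leqslant j \leqslant k$, and linearity of $s$ yields that each $u_j$ is linear with variables disjoint from those of the other $u_i$'s. I apply the induction hypothesis to each $u_j$. If all $u_j$ yield the first alternative with substitutions $\tau_j$, I glue them into a single $\tau$ by $\tau(x) := \tau_j(x)$ for $x \in \Var(u_j)$ and $\tau(x) := \sigma(x)$ otherwise; disjointness makes this well-defined and produces $s\tau = t$ with $x\sigma \PR x\tau$ on $\Var(s)$. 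If instead some $u_{j_0}$ yields the second alternative with witnesses $u_{j_0} = D_0[s'_0]$, a rule $\rho$, substitution $\tau \vDash \rho$, and multihole context $C''$, I set $D := f(u_1,\dots,D_0,\dots,u_k)$ with the hole in position $j_0$ and $s' := s'_0$, and assemble the outer multihole context $C'$ by combining $C''$ with the $C_j$'s for $j \neq j_0$; the required equalities for $D\sigma[r\tau]$ and $t$ then follow by rearranging the hole fillings.

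The main obstacle is the substitution assembly in the first sub-case of the inductive step: without linearity, two $\tau_j$'s could disagree on a shared variable and the glued $\tau$ would fail to exist. The constrained setting creates no additional difficulty here, because the uniform treatment of $\xR \cup \xRca$ in \defref{prmultihole} transparently propagates the condition $\tau \vDash \rho$ from the induction hypothesis to the conclusion, and only a single rule witnesses the second alternative, so no compatibility between constraints of distinct rules needs to be verified.
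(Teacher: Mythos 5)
Your proof is correct and follows the standard structural induction on the linear term $s$ that underlies \cite[Lemma~7]{NM16}; note that the paper itself gives no proof of this lemma, importing it directly from the Isabelle/HOL formalization for plain TRSs, so your argument fills in exactly what the citation delegates. Your treatment of the constrained setting is also the right one: the only change from the plain case is that unfolding a root step $s_i \Rb[\epsilon] t_i$ yields a rule $\rho \in \xR \cup \xRca$ together with $\tau \vDash \rho$ (calculation steps being covered by the rules $f(\seq{x}) \R y~\CO{y = f(\seq{x})}$ of $\xRca$), and this side condition is merely carried through the induction, while linearity is used precisely where you say it is, namely to glue the substitutions $\tau_j$ in the congruence case.
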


\begin{lemma}
\label{lem:lifting}
Suppose $s \approx t~\CO{\varphi} \PRbs[\geqslant p] u \approx v~\CO{\psi}$
and $\gamma \vDash \varphi$. If $p = 1$ then $s\gamma \PR u\delta$ and
$t\gamma = v\delta$ for some substitution $\delta$ with
$\delta \vDash \psi$. If $p = 2$ then $s\gamma = u\delta$ and
$t\gamma \PR v\delta$ for some substitution $\delta$ with
$\delta \vDash \psi$.
\end{lemma}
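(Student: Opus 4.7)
The plan is to follow the same strategy as for \lemref{key}, but lifted to parallel steps, with the proof by induction on the derivation of the underlying $\PRbR$-step according to the four cases of \defref{parallel rewriting constraints}. By symmetry it suffices to treat $p = 1$; the case $p = 2$ is completely analogous. Since $\PRbs = {\sim} \cdot \PRbR \cdot {\sim}$, I first peel off the two equivalence steps: given $s \approx t~\CO{\varphi} \sim s' \approx t'~\CO{\varphi'} \PRbR u' \approx v'~\CO{\psi'} \sim u \approx v~\CO{\psi}$ and $\gamma \vDash \varphi$, the definition of $\sim$ provides some $\gamma' \vDash \varphi'$ with $(s \approx t)\gamma = (s' \approx t')\gamma'$; after producing the desired $\delta' \vDash \psi'$ for the middle $\PRbR$-step, the second equivalence yields a $\delta \vDash \psi$ with $(u \approx v)\delta = (u' \approx v')\delta'$, and the conclusions transport along these identities.

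This reduces the problem to analysing a single step $s' \approx t'~\CO{\varphi'} \PRbR u' \approx v'~\CO{\psi'}$ with all contracted redexes below position $1$. Such a step necessarily starts with case 2 of \defref{parallel rewriting constraints} applied at the root with $\approx$ as the constructor, splitting into a parallel step $s'~\CO{\varphi'} \PRbR u'~\CO{\varphi' \land \alpha}$ and an idle step $t'~\CO{\varphi'} \PRbR t'~\CO{\varphi'}$, so $v' = t'$ and $\psi' = \varphi' \land \alpha$. It then remains to prove the following statement by induction on the inference tree producing $s'~\CO{\varphi'} \PRbR u'~\CO{\varphi' \land \alpha}$: for every $\gamma' \vDash \varphi'$ there exists $\delta' \vDash \varphi' \land \alpha$ extending $\gamma'$ on the fresh variables introduced by calculations such that $s'\gamma' \PR u'\delta'$.

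The base cases are routine. For case 1 (variable) take $\delta' = \gamma'$. For case 3 (rule step) with $s' = \ell\sigma$, $u' = r\sigma$, and $\sigma(x) \in \Val \cup \Var(\varphi')$ for every $x \in \LVar(\rho)$, the composition $\sigma\gamma'$ is a substitution that respects $\rho$ (exactly as in the corresponding step of the proof of \thmref{sccp}), so $\ell\sigma\gamma' \R r\sigma\gamma'$ and we take $\delta' = \gamma'$. For case 4 (calculation) $s' = f(\seq{v})$ with $\seq{v} \in \Val \cup \Var(\varphi')$ and $\alpha$ extends the constraint by $v = f(\seq{v})$ for a fresh $v$; set $\delta' = \gamma' \cup \SET{v \mapsto w}$ where $w = \inter{f(\seq{v})\gamma'}$, which exists because each $v_i\gamma' \in \Val$. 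For the inductive case 2 (congruence), apply the induction hypothesis to each argument; because different applications of case 4 introduce pairwise distinct fresh variables, the resulting extensions are compatible and their union gives the required $\delta'$, while case 2 of the definition of $\PRbR$ on plain terms combines the individual steps into $s'\gamma' \PR u'\delta'$.

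The main obstacle is the bookkeeping around case 4: the parallel step may create arbitrarily many fresh variables, each recorded as an equation in $\psi'$, and one must verify that the extension of $\gamma'$ to all of them simultaneously respects the conjoined constraint and that these fresh variables do not accidentally clash with variables of $\varphi'$ or with each other across independent subderivations. Once the disjointness of fresh variables is invoked (which is built into case 4), this verification is a straightforward check that can be formulated as an invariant of the induction, mirroring the way \lemref{key} is proved for single steps in \cite{WM18}.
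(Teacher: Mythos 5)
Your argument is correct, but it reaches the conclusion by a different route than the paper. The paper's proof of \lemref{lifting} is a two-line reduction: it treats the constrained parallel step as a sequence of single $\RbRs$-steps at pairwise parallel positions, applies \lemref{key} to each of them in turn, and appeals to the observation (extracted from the proof of \cite[Lemma~8]{WM18}) that positions are preserved when lifting steps on constrained terms to the unconstrained setting, so that the lifted single steps recombine into one unconstrained parallel step. You never serialize: you peel off the two $\sim$-conversions explicitly and then perform a structural induction on the derivation of the $\PRbR$-step following the four cases of \defref{parallel rewriting constraints}, re-establishing the single-step content inline (the composed substitution $\sigma\gamma'$ respects the rule in case~3; the calculation case extends $\gamma'$ by the value of the fresh variable) instead of citing \lemref{key}. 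What the paper's version buys is brevity and reuse of an existing lemma, at the price of leaving the serializability of $\PRbR$ on constrained terms and the position-preservation claim implicit; what your version buys is a self-contained argument that makes the bookkeeping for the fresh variables of calculation steps and for the two equivalence conversions explicit, and that adapts immediately to the $\Rs_{\geqslant p}^{*}$ variant mentioned after the lemma. Two cosmetic points: the claim that the root of the step on $s' \approx t'~\CO{\varphi'}$ must be an instance of case~2 deserves the one-line justification that no rewrite rule has $\approx$ at its root and that the restriction to positions $\geqslant 1$ excludes a root redex; and ``extending $\gamma'$ on the fresh variables'' should read ``agreeing with $\gamma'$ everywhere except on the fresh variables''. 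Neither affects correctness.
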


\begin{proof}
The statements are obtained by repeated applications of \lemref{key}. The
key observation, which follows from the proof of \cite[Lemma~8]{WM18}, is
that positions are preserved when lifting rewrite steps on
constrained terms to the unconstrained setting.
\qed
\end{proof}

The lemma above also holds for rewrite sequences
$s \approx t~\CO{\varphi} \Rs_{\geqslant p}^*
u \approx v~\CO{\psi}$.

\begin{proof}[of \thmref{apccp}]
Consider an arbitrary peak
\[
t \rPRb[C\sh\overline{a}] s \PRb[D\sh\overline{b}] u
\]
using rewrite rules from $\xR \cup \xRca$. We show
$t \PR \cdot \rPRa[*] u$ and $t \PR^{*} \cdot \rPR u$ by well-founded
induction on the amount of overlap between the co-initial parallel steps
using the order $\Supertermmul$ and
continue with a second induction on the term $s$ using $\Superterm$.
If $s = x \in \xV$ then $t = u = x$.
Suppose $s = f(\seq{s})$. We consider the following four cases.

\begin{enumerate}
\item
If $C = f(\seq{C})$ and $D = f(\seq{D})$ then $t = f(\seq{t})$,
$u = f(\seq{u})$ and we obtain partitions
$\overline{a}_1,\dots,\overline{a}_n = \overline{a}$ and
$\overline{b}_1,\dots,\overline{b}_n = \overline{b}$ of
$\overline{a}$ and $\overline{b}$ with
$t_i \rPRb[C_i\sh\overline{a}_i] s_i \PRb[D_i\sh\overline{b}_i] u_i$
for all $1 \leqslant i \leqslant n$. The amount of overlap for each
argument position is contained in the amount of overlap of the initial
peak according to \lemref{overlapprops}:
\begin{align*}
\overlaps\big(\rPRb[C_i\sh\overline{a}_i] s_i
\PRb[D_i\sh\overline{b}_i]\big) &~\subseteq~
\overlaps\big(\rPRb[C\sh\overline{a}] s \PRb[D\sh\overline{b}]\big)
\intertext{and thus}
\overlaps\big(\rPRb[C\sh\overline{a}] s \PRb[D\sh\overline{b}]\big)
&~\Supertermmuleq~
\overlaps\big(\rPRb[C_i\sh\overline{a}_i] s_i
\PRb[D_i\sh\overline{b}_i]\big)
\end{align*}
The inner induction hypothesis yields terms $\seq{v}, \seq{w}$ such that
$t_i \PR v_i \rPRa[*] u_i$ and $t_i \PRa[*] w_i \rPR u_i$ for
all $1 \leqslant i \leqslant n$. Hence
$t \PR f(\seq{v}) \rPRa[*] u$ and $t \PRa[*] f(\seq{w}) \rPR u$.
\smallskip
\item
If $C = D = \hole$ then both steps are root steps and thus single rewrite
steps. Hence the given peak can be written as
\[
t = r_1\sigma_1 \L_{\epsilon\sh\rho_1\sh\sigma_1} s
\R_{\epsilon\sh\rho_2\sh\sigma_2} r_2\sigma_2 = u
\]
with rewrite rules $\rho_1\colon \crr{\ell_1}{r_1}{\varphi_1}$ and
$\rho_2\colon \crr{\ell_2}{r_2}{\varphi_2}$ from $\xR \cup \xRca$
and substitutions $\sigma_1$ and $\sigma_2$ that respect these rules. We
have $s = \ell_1\sigma_1 = \ell_2\sigma_2$. We may assume that $\rho_1$
and $\rho_2$ have no variables in common, and thus
$\Dom(\sigma_1) \cap \Dom(\sigma_2) = \varnothing$.
From $\ell_1\sigma_1 = \ell_2\sigma_2$ we infer that
$\sigma' = \sigma_1 \cup \sigma_2$ is a unifier of $\ell_1$ and $\ell_2$.
Let $\sigma$ be an mgu of $\ell_1$ and $\ell_2$. Since
$\sigma_1 \vDash \rho_1$ and $\sigma_2 \vDash \rho_2$,
$\sigma'(x) \in \Val$ for all $x \in \LVar(\rho_1) \cup \LVar(\rho_2)$,
and thus $\sigma(x) \in \xV \cup \Val$ for all
$x \in \LVar(\rho_1) \cup \LVar(\rho_2)$.
Since $\varphi_1\sigma_1$ and $\varphi_2\sigma_2$ are valid,
the constraint $\varphi_1\sigma \land \varphi_2\sigma$ is satisfiable.
Hence conditions 1, 2, 3 and 4 in \defref{ccp} hold for
$\overlap[\epsilon]{\rho_1}{\rho_2}$ and thus we obtain a
critical pair, provided condition 5 also holds.
If condition 5 is \emph{not} fulfilled then $\rho_1$ and $\rho_2$ are
variants and $\Var(r_1) \subseteq \Var(\ell_1)$ (and thus also
$\Var(r_2) \subseteq \Var(\ell_2)$). Hence $r_1\sigma_1 = r_2\sigma_2$ and
thus $t = u$.
In the remaining case condition 5 holds and
$\overlap[\epsilon]{\rho_1}{\rho_2}$ does form an overlap. The induced
constrained critical pair is
$r_2\sigma \approx r_1\sigma~
\CO{\varphi_1\sigma \land \varphi_2\sigma \land \psi\sigma}$ with
\[
\psi = \bigwedge~\SET{x = x \mid x \in \EVar(\rho_1) \cup \EVar(\rho_2)}
\]
To simplify the notation, we abbreviate $r_1\sigma$ to $t'$, $r_2\sigma$
to $u'$, and $\varphi_1\sigma \land \varphi_2\sigma \land \psi\sigma$ to
$\varphi'$. By assumption critical pairs are almost parallel closed and
as we have an overlay we have both
\begin{enumerate}
\medskip
\item
$t' \approx u'~\CO{\varphi'} \mr{\PRbs[\geqslant 1]
\cdot \Rs_{\geqslant 2}^*} v \approx w~\CO{\psi'}$
for some trivial $v \approx w~\CO{\psi'}$, and
\smallskip
\item
$t' \approx u'~\CO{\varphi'} \mr{\PRbs[\geqslant 2]
\cdot \Rs_{\geqslant 1}^*} v' \approx w'~\CO{\psi''}$
for some trivial $v' \approx w'~\CO{\psi''}$.
\end{enumerate}
\smallskip
Let $\gamma$ be the substitution such that $\sigma\gamma = \sigma'$.
We claim that $\gamma$ respects $\varphi'$.
So let $x \in \Var(\varphi') =
\Var(\varphi_2\sigma \land \varphi_1\sigma \land \psi\sigma)$. We have
\begin{align*}
\LVar(\rho_1) &= \Var(\varphi_1) \cup \EVar(\rho_1) &
\LVar(\rho_2) &= \Var(\varphi_2) \cup \EVar(\rho_2)
\end{align*}
Together with $\Var(\psi) = \EVar(\rho_1) \cup \EVar(\rho_2)$ we obtain
\[
\LVar(\rho_1) \cup \LVar(\rho_2) = \Var(\varphi_1) \cup
\Var(\varphi_2) \cup \Var(\psi)
\]
Since $\sigma'(x) \in \Val$ for all
$x \in \LVar(\rho_1) \cup \LVar(\rho_2)$, we obtain $\gamma(x) \in \Val$
for all $x \in \Var(\varphi')$ and thus $\gamma \vDash \varphi'$.
At this point applications of \lemref{lifting}
to the constrained rewrite sequence in item 1 yields a substitution
$\delta$ respecting $\psi'$ such that
$t'\gamma \PR v\delta$ and
$u'\gamma \Ra[*] w\delta$.
Since $v \approx w~\CO{\psi'}$ is trivial, $v\delta = w\delta$ and hence
$t'\gamma \PR \cdot \La[*] u'\gamma$. The sequence
$t'\gamma \Ra[*] \cdot \rPR u'\gamma$ is obtained in the same way from
item 2. Hence
\begin{align*}
t'\gamma &= (r_1\sigma)\gamma = r_1\sigma' = r_1\sigma_1 = t &
u'\gamma &= r_2\sigma' = r_2\sigma_2 = u
\end{align*}
an thus $t \PR \cdot \La[*] u$ and $t \Ra[*] \cdot \rPR u$ as desired.
\smallskip
\item
If $C = f(\seq{C})$ and $D = \hole$ then the step to the right is a single
root step and we have $t = f(\seq{t}) \rPRb[C\sh\overline{a}] s =
\ell\sigma \Rb[\epsilon\sh\rho\sh\sigma] r\sigma = u$
with rule $\rho\colon \ell \R r\ [\varphi]$ from $\xR \cup \xRca$.
Since $\ell$ is linear by assumption, we can apply \lemref{linearpr},
which gives rise to the following two cases.
\begin{enumerate}
\item
In the first case $t = \ell\tau$ for some substitution $\tau$ with
$x\sigma \PR x\tau$ for all $x \in \Var(\ell)$. Define
\[
\delta(x) = \begin{cases}
\tau(x) &\text{if $x \in \Var(\ell)$} \\
\sigma(x) &\text{otherwise}
\end{cases}
\]
We have $t = \ell\tau = \ell\delta$ and claim that
$t \Rb[\epsilon] r\delta$. So we need to show that
$\delta \vDash \rho$.
If $x \in \LVar(\CRR)$ then $\sigma(x) \in \Val$ and thus
$x\sigma = x\tau = x\delta$. Hence also $\delta(x) \in \Val$.
Since $\Var(\varphi) \subseteq \LVar(\CRR)$, we obtain
$\gamma\sigma  = \gamma\delta$. Hence $\delta \vDash \rho$ as desired.
Moreover, $u = r\sigma \PR r \delta$ since $x\sigma \PR x\delta$ for all
variables $x \in \Var(r)$. The latter holds because
$x\sigma \PR x\tau = x\delta$ if
$x \in \Var(\ell)$ and $x\sigma = x\delta$ if $x \notin \Var(\ell)$.
\smallskip
\item
In the second case we obtain a context $E$, a non-variable term $\ell''$,
a rule $\rho'\colon \crr{\ell'}{r'}{\varphi'}$, a substitution $\tau$
respecting $\rho'$, and a multihole context $C'$ such that
$\ell = E[\ell'']$, $\ell''\sigma = \ell'\tau$,
$E\sigma[r'\tau] = C'[s_1,\dots,s_{i-1},s_{i+1},\dots,s_c]$ and
$t = C'[t_1,\dots,t_{i-1},t_{i+1},\dots,t_c]$ for some
$1 \leqslant i \leqslant c$.
Since we may assume without loss of
generality that $\rho$ and $\rho'$ are variable disjoint,
from $\ell''\sigma = \ell'\tau$ we infer that
$\sigma' = \sigma \cup \tau$ is a unifier for
$\ell''$ and $\ell'$. Let $\mu$ be an mgu of $\ell''$ and $\ell'$.
We distinguish two cases, depending on $\rho'$.

\medskip

First suppose $\rho' \in \xR$. Since $C \neq \hole$, we obtain a
constrained critical pair
$E\mu[r'\mu] \approx r\mu~\CO{\varphi'\mu \land \varphi\mu \land \psi\mu}$
with
\[
\psi = \bigwedge~\SET{x = x \mid x \in \EVar(\rho) \cup \EVar(\rho')}
\]
To simplify the notation, we abbreviate $E\mu[r'\mu]$ to $t'$, $r\mu$ to
$u'$, and $\varphi'\mu \land \varphi\mu \land \psi\mu$ to $\varphi''$.
By assumption critical pairs are almost parallel closed and, since we
deal with an inner critical pair, we obtain
\begin{align}
t' \approx u'~\CO{\varphi''} \PRbs[\geqslant 1]
v \approx w~\CO{\psi'}
\label{ps}
\end{align}
for some trivial constrained equation $v \approx w~\CO{\psi'}$.
Let $\gamma$ be the substitution such that $\mu\gamma = \sigma'$.
We claim that $\gamma$ respects $\varphi''$. So let
$x \in \Var(\varphi'') = \Var(\varphi'\mu \land \varphi\mu \land \psi\mu)$.
We have
\begin{align*}
\LVar(\rho') &= \Var(\varphi') \cup \EVar(\rho') &
\LVar(\rho) &= \Var(\varphi) \cup \EVar(\rho)
\end{align*}
Together with $\Var(\psi) = \EVar(\rho') \cup \EVar(\rho)$ we obtain
\[
\LVar(\rho') \cup \LVar(\rho) = \Var(\varphi') \cup \Var(\varphi) \cup
\Var(\psi)
\]
Since $\sigma'(x) \in \Val$ for all $x \in \LVar(\rho') \cup \LVar(\rho)$,
we obtain $\gamma(x) \in \Val$ for all $x \in \Var(\varphi'')$ and
thus $\gamma \vDash \varphi''$. We now apply \lemref{lifting} to
\eqref{ps}. This yields a substitution $\delta$ respecting $\psi'$ such
that $t'\gamma \PR v\delta$ and $u'\gamma = w\delta$. Since
$v \approx w~\CO{\psi'}$ is trivial, $v\delta = w\delta$ and hence
$t'\gamma \PR u'\gamma$. From
$t'\gamma = (E\mu[r'\mu])\gamma = E\sigma'[r'\sigma'] = E\sigma[r'\tau]$
and $u'\gamma = (r\mu)\gamma = r\sigma' = r\sigma = u$ we obtain
$E\sigma[r'\tau] \PR u$. Hence we have a new peak
\[
t \rPRb[C'\sh\overline{a}'] E\sigma[r'\tau] \PR u
\]
with $\overline{a}' = a_1,\dots,a_{i-1},a_{i+1},\ldots,a_c$.
We have
\begin{align*}
\overlaps\big(\rPRb[C\sh\overline{a}] s \PRb[\hole\sh\ell\sigma]\big)
= \SET{\seq[c]{a}} &~\Supertermmul~
\SET{a_1,\dots,a_{i-1},a_{i+1},\dots,a_c} \\
&~\Supertermmuleq~ \overlaps\big(\rPRb[C'\sh\overline{a}']
E\sigma[r'\tau] \PR\big)
\end{align*}
by \lemref{overlapprops}. Hence we can apply the outer induction
hypothesis, which yields $t \PR \cdot \La[*] u$ and
$t \Ra[*] \cdot \rPR u$.

\medskip

Next suppose $\rho' \in \xRca$. In this case
$\ell''\sigma = \ell'\tau = f(\seq{s})$ with $f \in \xFTh \setminus \xFTe$,
$\seq{s} \in \Val$ and $r' = \inter{f(\seq{s})}_\xJ$. So
$r'\tau = r'$ and from this point on we can reason as in the preceding
case.
\end{enumerate}
\smallskip
\item
The final case $D = f(\seq{D})$ and $C = \hole$ is obtained from the
previous one by symmetry.
\end{enumerate}
We have completed the proof of the strong confluence of $\PRbR$,
which implies the confluence of $\RbR$.
\qed
\end{proof}

\bibliographystyle{splncs04}
\bibliography{references}

\begin{thebibliography}{10}
\providecommand{\url}[1]{\texttt{#1}}
\providecommand{\urlprefix}{URL }
\providecommand{\doi}[1]{https://doi.org/#1}

\bibitem{BN98}
Baader, F., Nipkow, T.: Term Rewriting and All That. Cambridge University Press
  (1998). \doi{10.1017/CBO9781139172752}

\bibitem{CL18}
Ciobâc\u{a}, S., Lucanu, D.: A coinductive approach to proving reachability
  properties in logically constrained term rewriting systems. In: Galmiche, D.,
  Schulz, S., Sebastiani, R. (eds.) Proc.\ of the 9th International Joint
  Conference on Automated Reasoning. Lecture Notes in Artificial Intelligence,
  vol. 10900, pp. 295--311 (2018). \doi{10.1007/978-3-319-94205-6_20}

\bibitem{FKN17}
Fuhs, C., Kop, C., Nishida, N.: Verifying procedural programs via constrained
  rewriting induction. ACM Transactions on Computational Logic  \textbf{18}(2),
   14:1--14:50 (2017). \doi{10.1145/3060143}

\bibitem{H80}
Huet, G.: Confluent reductions: Abstract properties and applications to term
  rewriting systems. Journal of the ACM  \textbf{27}(4),  797--821 (1980).
  \doi{10.1145/322217.322230}

\bibitem{KN23}
Kojima, M., Nishida, N.: From starvation freedom to all-path reachability
  problems in constrained rewriting. In: Hanus, M., Inclezan, D. (eds.) Proc.\
  of the 25th International Symposium on Practical Aspects of Declarative
  Languages. Lecture Notes in Computer Science, vol. 13880, pp. 161--179
  (2023). \doi{10.1007/978-3-031-24841-2_11}

\bibitem{K13}
Kop, C.: Termination of {LCTRSs}. In: Proc.\ of the 13th International Workshop
  on Termination. pp. 59--63 (2013)

\bibitem{K16}
Kop, C.: Termination of {LCTRSs}. CoRR  \textbf{abs/1601.03206} (2016).
  \doi{10.48550/ARXIV.1601.03206}

\bibitem{KN13}
Kop, C., Nishida, N.: Term rewriting with logical constraints. In: Fontaine,
  P., Ringeissen, C., Schmidt, R.A. (eds.) Proc.\ of the 9th International
  Workshop on Frontiers of Combining Systems. Lecture Notes in Artificial
  Intelligence, vol.~8152, pp. 343--358 (2013).
  \doi{10.1007/978-3-642-40885-4_24}

\bibitem{KN15}
Kop, C., Nishida, N.: {C}onstrained {T}erm {R}ewriting too{L}. In: Proc.\ of
  the 20th International Conference on Logic for Programming, Artificial
  Intelligence, and Reasoning. Lecture Notes in Artificial Intelligence,
  vol.~9450, pp. 549--557 (2015). \doi{10.1007/978-3-662-48899-7_38}

\bibitem{NM16}
Nagele, J., Middeldorp, A.: Certification of classical confluence results for
  left-linear term rewrite systems. In: Blanchette, J.C., Merz, S. (eds.)
  Proc.\ of the 7th International Conference on Interactive Theorem Proving.
  Lecture Notes in Computer Science, vol.~9807, pp. 290--306 (2016).
  \doi{10.1007/978-3-319-43144-4_18}

\bibitem{T88}
Toyama, Y.: Commutativity of term rewriting systems. In: Fuchi, K., Kott, L.
  (eds.) Programming of Future Generation Computers II, pp. 393--407.
  North-Holland (1988)

\bibitem{WM18}
Winkler, S., Middeldorp, A.: Completion for logically constrained rewriting.
  In: Kirchner, H. (ed.) Proc.\ of the 3rd International Conference on Formal
  Structures for Computation and Deduction. Leibniz International Proceedings
  in Informatics, vol.~108, pp. 30:1--30:18 (2018).
  \doi{10.4230/LIPIcs.FSCD.2018.30}

\bibitem{WM21}
Winkler, S., Moser, G.: Runtime complexity analysis of logically constrained
  rewriting. In: Fern\'andez, M. (ed.) Proc.\ of the 30th International
  Symposium on Logic-Based Program Synthesis and Transformation. Lecture Notes
  in Computer Science, vol. 12561, pp. 37--55 (2021).
  \doi{10.1007/978-3-030-68446-4_2}

\end{thebibliography}

\end{document}